\theoremstyle{plain}
\newtheorem{proposition}{Proposition}
\theoremstyle{definition}
\title{Exact Bayesian Gaussian Cox Processes Using Random Integral}
\author{
 Bingjing Tang\\
 Department of Statistics\\
  Stanford University\\
   \texttt{bingjing@stanford.edu}  \\
   \And
   Julia Palacios \\
    Department of Statistics\\
  Stanford University\\
   \texttt{juliapr@stanford.edu} \\
}
\begin{document}

\maketitle

\begin{abstract}\label{sec:abs}
 A Gaussian Cox process is a popular model for point process data, in which the intensity function is a transformation of a Gaussian process. Posterior inference of this intensity function involves an intractable integral (i.e., the cumulative intensity function) in the likelihood resulting in doubly intractable posterior distribution.  Here, we propose a nonparametric Bayesian approach for estimating the  intensity function of an inhomogeneous Poisson process without reliance on large data augmentation or approximations of the likelihood function.  
We propose to jointly model the intensity and the cumulative intensity function as a transformed Gaussian process, 
allowing us to directly bypass the need of approximating the cumulative intensity function in the likelihood. We propose an exact MCMC sampler for posterior inference and evaluate its performance on simulated data. We demonstrate the utility of our method in three real-world scenarios including temporal and spatial event data, as well as aggregated time count data collected at multiple resolutions. 
Finally, we discuss extensions of our proposed method to other 
point processes.
\end{abstract}
\section{Introduction}\label{sec:intro}
Cox processes~\citep{cox1955some,guttorp2012happened} provide useful tools for modeling point process phenomena in a variety of disciplines including biology~\citep{legg2012clustering}, finance~\citep{lando1998cox}, astronomy~\citep{lawrence2016point}, and many others \citep{chen2020neural}. 
Cox processes are also known as doubly stochastic Poisson processes, arising from inhomogeneous Poisson processes with a random intensity measure, which is in turn, a realization from a second random process.

An important class of Cox processes is the log Gaussian Cox processes~\citep{moller1998log}, in which the log intensity function is a Gaussian process ensuring positivity of the intensity function. 
This transformed Gaussian process is a convenient flexible nonparametric prior model of the intensity function without restricting it to a particular functional form. 
Though the class of log Gaussian Cox processes possesses abundant appealing properties and it is widely used as a model for temporal and spatio-temporal point process data~\citep{diggle2013spatial}, the corresponding posterior inference of the intensity function is doubly intractable~\citep{moller2006efficient,murray2012mcmc} due to the intractability of the cumulative intensity function (the integral of the intensity function over data domain) in the likelihood (~\cref{eq:like}). 
A classical approach~\citep{moller1998log,rathbun1994asymptotic,rue2009approximate,taylor2014inla,teng2017bayesian} assumes the intensity function is a finite-dimensional piecewise constant log-Gaussian function. 
Bayesian inference of the intensity function is then performed by MCMC~\citep{lopez2019gaussian,moller1998log,rathbun1994asymptotic,taylor2013lgcp,taylor2015bayesian,teng2017bayesian}, variational Bayes~\citep{teng2017bayesian,watson2024twenty}, or integrated nested Laplace approximation (INLA)~\citep{rue2009approximate,taylor2014inla}. 
Among the three computational approaches, MCMC brings the most accuracy while INLA usually leads to fastest computation~\citep{taylor2014inla,teng2017bayesian}. 
An important aspect in the discretization methods is that the choice of change points (grids) controls the trade-off between numerical accuracy and computational efficiency~\citep{samo2015scalable}.

Apart from those discretization methods, 
nonparametric Bayesian approaches ~\citep{adams2009tractable,kottas2006dirichlet, kottas2007bayesian,  samo2015scalable} have been proposed for exact MCMC or variational inference for Cox processes.
Here `exact' means all terms in the likelihood can be computed without  approximations. \citet{adams2009tractable} eliminates the need for approximating the cumulative intensity function by estimating an augmented posterior distribution of the intensity function and latent points via MCMC.
Specifically, stimulated by the thinning algorithm of~\citet{lewis1979simulation}, a point-process variant of rejection sampling, 
\citet{adams2009tractable} introduced latent (thinned) points in such a way that 
the joint distribution of observed and latent points follows a homogeneous Poisson process.
Therefore the joint posterior inference of 
the intensity function and latent points could be easily implemented via standard MCMC samplers, e.g., a block Gibbs sampler. 
Since this approach does not involve any approximation error in the likelihood, its performance is more accurate than competing discretization methods. 
However, it has several limitations. First, its complexity is cubic in the number of all data points including both observed and latent points. 
Moreover, it is not applicable to high-dimensional input data since the expected number of latent points grows proportionally with the volume of the data domain. For instance, an $n$-dimensional hypercube with sides of length s has volume $s^n$. 

Variational Bayesian approximations for Gaussian Cox processes have been recently proposed~\citep{aglietti2019structured,donner2018efficient,lloyd2015variational}. These methods employ inducing points to reduce computation costs, albeit usually yielding less precise outcomes compared to the aforementioned exact MCMC inference approaches. \citet{donner2018efficient} target an augmented posterior in which in addition to introducing thinning points (as in \citep{adams2009tractable}), the authors add P\'{o}lya-Gamma marks to those points. This allows them to derive analytical expressions of variational posterior mean and covariance. However, their results depend on finite dimensional integrals which cannot be computed analytically. In \citet{lloyd2015variational}, the authors restrict the intensity function to a quadratic function of a Gaussian process.

An alternative approach is to treat the cumulative intensity function as a latent random variable. \citet{kottas2006dirichlet} and \citet{kottas2007bayesian} 
factor 
the intensity function as the product of the normalized intensity function and the cumulative intensity function, each with independent priors. The authors then place a  Beta Dirichlet process mixture prior on 
the normalized intensity function, and a 
Jeffreys prior on the cumulative intensity function. 
Though this model is appealing theoretically, its MCMC posterior computation needs truncating the number of mixture components to some finite number as it is typically done in the Dirichlet process mixture model.
Similarly, \Citet{samo2015scalable} treat the cumulative intensity function as a latent variable with a Gamma prior, however the authors estimate the posterior distribution of the intensity function only at inducing points via MCMC. Motivated by improving scalability, the authors select the set of inducing points by optimizing an utility function in a similar fashion as~\citet{quinonero2005unifying}. 
Instead of placing a functional prior on the intensity function, they construct a finite-dimensional prior on intensity function values at inducing and observed points together with the cumulative intensity function. 

In this paper, in addition to placing a transformed Gaussian process prior on the intensity function, we a priori model the cumulative intensity function as a latent random variable.
We then perform an exact MCMC inference of the augmented posterior distribution of intensity function values at observed and predicted locations together with the cumulative intensity function. Our proposed method involves neither discretization nor  
heavy data augmentation.  Quite different from \citet{kottas2006dirichlet} and \citet{samo2015scalable}, though we also conceive the cumulative intensity function as random, we employ a transformed Gaussian process as the functional prior, which fortunately implies an analytical joint prior over the cumulative intensity function and intensity function values at
a finite set of locations. Moreover, we will show that our proposed method can be further extended to handle count-time data.  To our knowledge, our method is the first exact Gaussian Cox Processes model that includes inference of intensity function from recurrent event data, count data, and a mixture of both types.
More details are presented in the following section. Code implementation of our methods and documentation are available at  
https://github.com/exgpcp/RI.
\section{Methods}
We first assume that we observe a set of $N$ recurrent events $\{s_n\}_{n=1}^N\in\mathcal{S}$ from an inhomogeneous Poisson process on $\mathcal{S} \in \mathbbm{R}^D$ with intensity function $\lambda(\cdot):\mathcal{S}\rightarrow  \mathbbm{R}^+$, and likelihood 
\begin{align}
 p(\{s_n\}_{n=1}^N \vert \lambda)=\exp\left\{ -\Lambda(\mathcal{S}) \right\} \prod_{n=1}^N \lambda(s_n).  \label{eq:like}\end{align}
where $\Lambda(\mathcal{S})=\int_\mathcal{S} \lambda(s)ds$.
Typically, $\lambda(s)$ is a priori modeled as a transformed Gaussian process. A Gaussian process is a stochastic process such that function values at every finite set of locations follow a multivariate Gaussian distribution \citep{williams2006gaussian}. The transformation is chosen to ensure $\lambda(s)>0 \;\forall s\in \mathcal{S}$. Popular choices for transformations are exponential \citep{moller1998log} and sigmoid \citep{adams2009tractable}. An important consequence of this nonparametric prior is that 
the cumulative intensity function $\Lambda(\mathcal{S})$ in \cref{eq:like} becomes intractable since $\lambda(s)$ is an infinite dimensional random function. 
To avoid approximating the cumulative intensity function in \cref{eq:like}, we a priori model $\lambda(s)$ and $\Lambda(\mathcal{S})$ jointly as a positive Gaussian process. In particular,
we a priori model 
$\boldsymbol{\lambda}:=\left[\lambda(x_1),\dots, \lambda(x_M), \Lambda(\mathcal{S})\right]'$ at locations of interest, as a positive truncated multivariate Gaussian distribution conditioned on all its function values being positive. Locations of interest include both observed points $\left\{s_n\right\}_{n=1}^N$, and prediction (test) points $\left\{t_l\right\}_{l=1}^{M-N}$, that is, $\left\{x_i\right\}_{i=1}^M \coloneqq \left\{t_l\right\}_{l=1}^{M-N} \cup\left\{s_n\right\}_{n=1}^N$. 
In addition to event data, we allow data to come as a mixture of recurrent events and binned count data with likelihood
\begin{align} 
 p(\{s_n\}_{n=1}^N, \{c_j\}_{j=1}^J \vert \lambda)=\exp\left\{ -\Lambda(\mathcal{S}) \right\} \prod_{n=1}^N \lambda(s_n) \prod_{j=1}^J\Lambda(B_j)^{c_j}.  \label{eq:like2}\end{align}
 where $\{s_n\}_{n=1}^N\in B_0\subset \mathcal{S}$, $\left\{B_j\right\}_{j=0}^J$ forms a partition of $\mathcal{S}$, and $ c_j$ denotes counts of events in $B_j,\; \forall j \in \left\{1\dots, J\right\}$ . 
\subsection{Positive Gaussian process prior}\label{sec:positive prior}

Our prior on $\boldsymbol{\lambda}$ has many advantages. First, having a positive Gaussian process prior on the intensity function implies that we no longer need to transform it to ensure positivity. This in turn, allows us to a priori model $\Lambda(\mathcal{S})$ (instead of integral over transformation of $\lambda(s)$) as a positive random variable. The consequence is that the vector of the intensity function at a finite set of points and the cumulative intensity function, a priori, follow a well defined positive truncated multivariate Gaussian distribution. The following theorem allows us to couple the intensity function at a finite set of values and the cumulative intensity function, in the unrestricted Gaussian case. The proof can be found in Appendix \ref{appendix:integral}.

\begin{restatable}{theorem}{abc}
Suppose the Gaussian process $f(\cdot)$
on the compact space $\mathcal{X}$  satisfies the assumption that its mean function $\mu(\cdot)$ and covariance kernel $k(\cdot,\cdot)$ are integrable, i.e., $ \int_{\mathcal{X}}\mu(s)ds$,$\int_{\mathcal{X}}k(s,t)dt$ and $\int\int_{\mathcal{X}\times\mathcal{X}}k(s,t)dsdt$ exist.
For every finite set of vectors $s_1, \dots, s_p \in \mathcal{X}$, the corresponding vector $\mathbf{f}\coloneqq \left[f(s_1),\dots, f(s_p), \int_{\mathcal{X}}f(s)ds\right]'$ follows a Gaussian distribution and
\begin{align*}
   \mathbf{f}\sim \mathcal{N}  \left( \boldsymbol{\mu},  \begin{pmatrix}
\boldsymbol{V}_{SS} & \boldsymbol{V}_{SI}\\
\boldsymbol{V}_{SI}' &\boldsymbol{V}_{II} \\
\end{pmatrix} \right),
 \end{align*}
 where $\boldsymbol{\mu}\coloneqq \left[\mu(s_1),\dots, \mu(s_p), \int_\mathcal{X}\mu(s)ds \right]'$, $\boldsymbol{V}_{SS}$ is the $p\times p$ kernel matrix containing covariance terms for all pairs of function values $\left\{f(s_i)\right\}_{i=1}^p$, $\boldsymbol{V}_{SI}$ is a $p$-dimensional vector formed by covariance terms between function values $\left\{f(s_i)\right\}_{i=1}^p$ and $\int_\mathcal{X}f(s)ds$ with $i$-th term being $\int_\mathcal{X}k(s_i, t)dt$, and $\boldsymbol{V}_{II}$ is the covariance of $\int_\mathcal{X}f(s)ds$, whose value is $\int\int _{\mathcal{X}\times\mathcal{X}}k(s,t)dsdt$ .
\label{thm:integral}
\end{restatable}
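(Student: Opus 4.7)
The plan is to reduce the theorem to a limiting statement about Riemann sums, each of which is automatically Gaussian, and then verify the limits of the means and covariances using Fubini's theorem under the stated integrability assumptions.

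First, I would take a sequence of partitions $\{A^{(n)}_1,\dots,A^{(n)}_{K_n}\}$ of $\mathcal{X}$ with mesh size shrinking to zero, together with tag points $t^{(n)}_j\in A^{(n)}_j$ and weights $w^{(n)}_j=\mathrm{vol}(A^{(n)}_j)$, and define the Riemann-sum random variable $S_n := \sum_{j=1}^{K_n} w^{(n)}_j\, f(t^{(n)}_j)$. For each fixed $n$, the augmented vector $\mathbf{f}_n := [f(s_1),\dots,f(s_p),S_n]'$ is obtained by applying a deterministic linear map to the jointly Gaussian finite-dimensional vector $(f(s_1),\dots,f(s_p),f(t^{(n)}_1),\dots,f(t^{(n)}_{K_n}))$, so $\mathbf{f}_n$ is multivariate Gaussian.

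Next I would define $I := \int_{\mathcal{X}} f(s)\,ds$ as a mean-square (Bochner-in-$L^2$) integral, which exists under the integrability of $\mu$ and the double integrability of $k$. The key step is to show $S_n \to I$ in $L^2(\Omega)$ as $n\to\infty$. By Fubini applied to $\mathbb{E}[(S_n-I)^2]$, this quantity reduces to a deterministic Riemann-sum approximation of $\int\int_{\mathcal{X}\times\mathcal{X}} k(s,t)\,ds\,dt$ plus a term depending on $\mu$, and both converge to zero by the integrability hypotheses. In particular, $S_n \to I$ in distribution, and the first two moments converge:
\begin{align*}
\mathbb{E}[S_n] &\to \textstyle\int_{\mathcal{X}}\mu(s)\,ds, \\
\mathrm{Cov}(f(s_i), S_n) &\to \textstyle\int_{\mathcal{X}} k(s_i,t)\,dt, \\
\mathrm{Var}(S_n) &\to \textstyle\int\int_{\mathcal{X}\times\mathcal{X}} k(s,t)\,ds\,dt.
\end{align*}

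Since each $\mathbf{f}_n$ is Gaussian and $\mathbf{f}_n \to \mathbf{f}$ in $L^2$ (and hence in distribution), the limit $\mathbf{f}$ is Gaussian: this uses the standard fact that the class of Gaussian laws is closed under weak convergence, with mean and covariance given by the limits above. Reading off the limiting parameters yields exactly the block structure $(\boldsymbol{\mu},\boldsymbol{V}_{SS},\boldsymbol{V}_{SI},\boldsymbol{V}_{II})$ stated in the theorem.

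The main technical obstacle is justifying the exchange of expectation and integral, i.e., applying Fubini to derive the limiting moments and to show $\mathbb{E}[(S_n-I)^2]\to 0$. This is exactly where the integrability assumptions on $\mu(\cdot)$, $\int k(s,\cdot)$, and $k(\cdot,\cdot)$ over $\mathcal{X}\times\mathcal{X}$ enter: they provide the dominating integrable functions needed to pass to the limit and to identify the variance of $I$ with the double integral of the kernel. Everything else is a routine consequence of the linearity of Gaussian vectors and closedness of the Gaussian family under $L^2$ limits.
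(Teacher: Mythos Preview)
Your proposal is correct and follows essentially the same strategy as the paper: approximate the integral by Riemann sums, observe that the augmented finite-dimensional vector is Gaussian by linearity, and pass to the limit using closedness of the Gaussian family together with convergence of means and covariances. The only notable difference is the mode of convergence: the paper assumes sample-path continuity and works with almost-sure convergence of Riemann sums (invoking Proposition~12.15 of M\"orters--Peres), whereas you define $I$ as a mean-square integral and pass to the limit in $L^2$, which is arguably cleaner since it relies only on the stated integrability of $\mu$ and $k$ rather than an unstated continuity hypothesis.
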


To define our Gaussian prior, we impose an additional positive constraint to $\boldsymbol{\lambda}$. To be precise, $\boldsymbol{\lambda}$ is a positive Gaussian vector, i.e, $\boldsymbol{\lambda} \sim \mathcal{TN}(\boldsymbol{\mu}, \boldsymbol{V})$, with density:
$$p(\boldsymbol{\lambda})=\frac{\exp\left\{ -\frac{1}{2}(\boldsymbol{\lambda}-\boldsymbol{\mu})'\boldsymbol{V}^{-1} (\boldsymbol{\lambda}-\boldsymbol{\mu})    \right\}}{\int_{\mathcal{F}} \exp\left\{ -\frac{1}{2}(\boldsymbol{\lambda}-\boldsymbol{\mu})'\boldsymbol{V}^{-1} (\boldsymbol{\lambda}-\boldsymbol{\mu})    \right\} d\boldsymbol{\lambda}} \cdot \mathbbm{1}( \boldsymbol{\lambda}> \boldsymbol{0})$$
where $\mathcal{F}=[0,+\infty]^{M+1}$ and $\mathbbm{1}( \boldsymbol{\lambda}> \boldsymbol{0})$ is an indicator function that takes 1 if all elements of $\boldsymbol{\lambda}$ are positive. In this work, we set mean $\boldsymbol{\mu}$ to be zero and 
consider the following two covariance kernels: the squared exponential kernel with hyperparameters  $\theta=(\theta_0,\theta_1)$, i.e., $k_{SE}(x,x')=\theta_0\exp\left({-\frac{\theta_1\|x-x'\|^2}{2}}\right)$, and the Brownian motion covariance kernel with a hyperparameter $\theta$  denoting the precision parameter, i.e., $k_{BM}(x,x')=\frac{1}{\theta}\min(x,x')$ (see their integrals in Appendix \ref{appendix:integralcomp}). We also assume $\mathcal{S}=[0,T]$ and $T$ is known. In general, one can select covariance kernels with analytic integrals, such as linear, squared exponential, and Brownian motion covariance kernels~\citep{williams2006gaussian}.

\subsection{Posterior inference}\label{sec:3.4} 
We are interested in estimating the posterior
\begin{align}
p(\boldsymbol{\lambda}, \theta | \{ x_i\}_{i=1}^M)&\propto\ p_\theta(\theta)\cdot\mathcal{TN}(\boldsymbol{\lambda} ; \boldsymbol{0}, \boldsymbol{V}_\theta) \cdot\exp\left\{ -\Lambda(\mathcal{S}) \right\} \cdot\prod_{n=1}^N \lambda(s_n) ,
\label{eq:truncposterior}
\end{align}
where the covariance $\boldsymbol{V}_\theta$ is constructed from the kernel function $k_\theta(\cdot,\cdot)$ as described in $\cref{thm:integral}$, and the mean of the GP prior is assumed to be zero. 
For the Brownian Motion covariance kernel, we estimate the posterior distribution via Metropolis-within-Gibbs sampling in two steps, alternating between $\boldsymbol{\lambda}$ and $\theta$. For the squared exponential kernel we take an empirical Bayes approach and fix $\theta$ to an estimated value.
\begin{description}
\item[Sample $\boldsymbol{\lambda}\;|\theta,x_1,\dots,x_M$ from:] 
\end{description}
\begin{align}
\begin{split}
p(\boldsymbol{\lambda}\vert \theta, \{ x_i\}_{i=1}^M) 
  \propto \mathcal{N}\left(\boldsymbol{\lambda} ; \mathbf{0},\boldsymbol{V}_\theta\right)\mathbbm{1}\left(\Lambda(\mathcal{S})>0\right)\exp\left\{ -\Lambda(\mathcal{S}) \right\}  \prod_{i=1}^M\mathbbm{1}\left(\lambda(x_i)>0\right) \prod_{n=1}^N \lambda(s_n). \label{eq:pos1}
\end{split}
\end{align}
Sampling from \cref{eq:pos1} via a Metropolis-Hastings algorithm with Gaussian proposal would lead to rare acceptance due to the positivity constraint. However, we found that a routine elliptical slice sampler~\citep{murray2010elliptical} and Hamiltonian Monte Carlo~\citep{neal2011mcmc} work well. 
To run an elliptical slice sampler, we simply pull out the  indicator terms from the truncated Gaussian prior and push them into the likelihood function. 
To perform a Hamiltonian Monte Carlo, the crucial step is computing the gradient of the logarithm of the target density. The gradient and derivation details are in 
Appendix \ref{appendix:gradient}.
In our implementations, we resort to the No-U-Turn HMC sampler~\citep{hoffman2014no}, to avoid time-consuming parameter tuning.

\begin{description}
\item[Sample $\theta|\boldsymbol{\lambda},x_1,\dots,x_M$ from:] 
\end{description}
\begin{align}
p(\theta | \boldsymbol{\lambda},\{ x_i\}_{i=1}^M)&\propto\ p_\theta(\theta)\cdot\frac{\exp\left\{    -\frac{1}{2} \boldsymbol{\lambda}'\boldsymbol{V}_\theta^{-1}\boldsymbol{\lambda} \right\}}{ \int_\mathcal{F} \exp  \left\{-\frac{1}{2} \boldsymbol{\lambda}'\boldsymbol{V}_\theta^{-1}\boldsymbol{\lambda}\right\}d\boldsymbol{\lambda}},\label{eq:pos2}
\end{align}
where $\mathcal{F}=[0,+\infty)^{M+1}$. To sample from~\cref{eq:pos2}, we note that  
the term $\int_\mathcal{F} \exp  \left\{-\frac{1}{2} \boldsymbol{\lambda}'\boldsymbol{V}_\theta^{-1}\boldsymbol{\lambda}\right\}d\boldsymbol{\lambda}$ in~\cref{eq:pos2} needs to be numerically approximated,
therefore simulating from this conditional posterior distribution is a nontrivial problem. For example, HMC sampling from the full conditional of $\theta$  
is not directly applicable here. 
However, in the case of Brownian motion covariance kernel with Gamma prior on $\theta$, we obtain conjugacy and it is possible to sample from \cref{eq:pos2} directly. 
To show this, consider the Brownian motion covariance kernel: $\boldsymbol{V}_\theta=\frac{1}{\theta}\; C$, 
where 
\begin{align}
C=\begin{pmatrix}
x_1&\dots & \min(x_1, x_M)&x_1T-\frac{1}{2}x_1^2 \\
\vdots&\ddots&\vdots&\vdots\\
\min(x_M ,x_1)&\dots &  x_M&x_MT-\frac{1}{2}x_M^2 \\
x_1T-\frac{1}{2}x_1^2&\dots &  x_MT-\frac{1}{2}x_M^2  &\frac{1}{3}T^3 \end{pmatrix}.\label{eq:BMcov}
\end{align}
The marginal posterior now becomes
\begin{align}
p(\theta &| \boldsymbol{\lambda},\{ x_i\}_{i=1}^M)\propto p_\theta(\theta)\cdot\frac{\exp\left\{    -\frac{\theta}{2} \boldsymbol{\lambda}'C^{-1}\boldsymbol{\lambda} \right\}}{ \int_\mathcal{F} \exp  \left\{-\frac{\theta}{2} \boldsymbol{\lambda}'C^{-1}\boldsymbol{\lambda} \right\}d\boldsymbol{\lambda}}\nonumber\\
&=p_\theta(\theta)\cdot\frac{\exp\left\{    -\frac{\theta}{2} \boldsymbol{\lambda}'C^{-1}\boldsymbol{\lambda} \right\}}{ \sqrt{\theta}^{-(M+1)}\int_\mathcal{F} \exp  \left\{-\frac{1}{2} (\sqrt{\theta}\boldsymbol{\lambda})'C^{-1}(\sqrt{\theta}\boldsymbol{\lambda}) \right\}d\sqrt{\theta}\boldsymbol{\lambda}}\nonumber\\
    &=p_\theta(\theta)\cdot\frac{\exp\left\{    -\frac{\theta}{2} \boldsymbol{\lambda}'C^{-1}\boldsymbol{\lambda} \right\}}{ \sqrt{\theta}^{-(M+1)}\int_\mathcal{F} \exp  \left\{-\frac{1}{2} \textbf{z}'C^{-1}\textbf{z} \right\}d\textbf{z}}\propto p_\theta(\theta)\cdot \sqrt{\theta}^{M+1}\exp\left\{    -\frac{\theta}{2} \boldsymbol{\lambda}'C^{-1}\boldsymbol{\lambda} \right\},
\label{eq:posteriortheta}
\end{align}

where $\textbf{z}=\sqrt{\theta}\boldsymbol{\lambda}$. The last step simplifies by noticing that the integral in the denominator no longer depends on $\theta$. Setting $p_\theta(\theta)=Gamma(\alpha,\beta)$, 
\cref{eq:posteriortheta} becomes a Gamma distribution with parameters $\Tilde{\alpha}=\alpha+\frac{M+1}{2}$ and $\Tilde{\beta}=\beta+\frac{1}{2}\boldsymbol{\lambda}'C^{-1}\boldsymbol{\lambda}$. 

Generally, a Brownian motion kernel brings three computational advantages: (1) the integral denominator now becomes analytical with respect to $\theta$, namely, a product of a constant and $\sqrt{\theta}^{-{(M+1)}}$; (2) the conjugacy leads to a parametric marginal posterior distribution amenable for Gibbs sampling; (3) we do not need to compute $\boldsymbol{V}_\theta$, $\boldsymbol{V}_\theta^{-1}$, nor $chol(\boldsymbol{V}_\theta)$ for each iterative update of $\theta$, and instead we just need to compute $C$, $C^{-1}$, and $chol(C)$ only once. Moreover, $C^{-1}$ is a tri-diagonal matrix \citep{rue2005gaussian} amenable to fast sparse matrix computations of matrix inverse and Cholesky decomposition.    

A standard Brownian motion starts from 0 at the origin, i.e. $\lambda(0)=0$, leading to low posterior values around the origin. 
To fix this problem, \citet{rue2005gaussian} proposed to use intrinsic Gaussian
Markov random fields priors with a proper correction at the boundary of the precision matrix $C^{-1}$. This is equivalent to placing a noninformative prior on $\lambda(0)$ and then marginalizing $\lambda(0)$ out. 
 In the rest of this section we will show how to apply this boundary correction technique.
 
First note that 
the distribution of $\boldsymbol{\lambda}$ conditioned on $\lambda(0)=y$ can be expressed as:
\begin{align}    
p\left(\boldsymbol{\lambda}\;|\;\lambda(0)=y,\theta\right)&\propto \exp\left\{ -\frac{\theta}{2} \left( \boldsymbol{\lambda}-y\boldsymbol{l}    \right)'C^{-1}   \left( \boldsymbol{\lambda}-y\boldsymbol{l}     \right)     \right\}\mathbbm{1}(\boldsymbol{\lambda}>\boldsymbol{0}),\label{eq:BMcon}
\end{align}

where $\boldsymbol{l}=(1,\dots,1,T)'$. Even though we conditioned on the initial value at $y$, i.e., $\lambda(0)=y$, the random walk density of $\boldsymbol{\lambda}$ is expressed in terms of function differences $\lambda(x_{i+1})-\lambda(x_{i})=\lambda(x_{i+1})-y-(\lambda(x_{i})-y)$. This allows us to express the conditional density as a multivariate Gaussian distribution with shifted mean. Details are derived in Appendix \ref{appendix:brownian}. We then place a flat Gaussian prior on $\lambda(0)$,  $\mathcal{N}(y;0,\sigma^2)$, with a large value of $\sigma$, and integrate $\lambda(0)$ out to obtain the following  modified random walk density:
\begin{align}
p\left(\boldsymbol{\lambda}\;|\;\theta\right) &= \int_{-\infty}^{+\infty} p\left(\boldsymbol{\lambda}\;|\;\lambda(0)=y,\theta\right)\mathcal{N}(y;0,\sigma^2)dy \approx \exp\left\{ -\frac{\theta}{2} \boldsymbol{\lambda}'\tilde{Q}\boldsymbol{\lambda}\right\}\mathbbm{1}(\boldsymbol{\lambda}>\boldsymbol{0}) \label{eq:BMcorrection}
\end{align}

where $\tilde{Q}=C^{-1}- \cfrac{C^{-1}\boldsymbol{l}\boldsymbol{l}'C^{-1}}{\boldsymbol{l}'C^{-1}\boldsymbol{l}}$. Since $\tilde{Q}$ is rank deficient, we usually add a small perturbation to its diagonal elements to obtain its inverse, i.e., $\tilde{C}=\left( \tilde{Q}+\epsilon\textbf{I}\right)^{-1}$. In our implementations, we simply replace $C$ with $\Tilde{C}$ in \cref{eq:posteriortheta} and obtain
\begin{align}
p(\theta|\boldsymbol{\lambda})&\propto p_\theta(\theta)\cdot \sqrt{\theta}^{M+1}\exp\left\{    -\frac{\theta}{2} \boldsymbol{\lambda}'\Tilde{C}^{-1}\boldsymbol{\lambda} \right\}=\Gamma\left( \alpha+\frac{M+1}{2},\beta+\frac{1}{2}\boldsymbol{\lambda}'\tilde{C}^{-1}\boldsymbol{\lambda}\right)\label{eq:posteriorthetafinal}
\end{align}
To clarify, this intrinsic random walk with precision matrix boundary correction also corresponds to the first-order random walk utilized in the popular integrated nested Laplace approximation model (INLA) \citep{rue2009approximate}, a baseline model we use for comparisons in~\cref{sec:exp}.

For other covariance kernels, we fix the value of $\theta$ to an estimated value first and then we perform posterior sampling of $\boldsymbol{\lambda}$ at locations of interest conditioned on $\theta$ as in \cref{eq:pos1}.

\paragraph{Estimation of $\theta$.} 
We first assume the intensity function is a piecewise constant function according to a regular grid of $m-1$ points. Denote the regular interval length as $\Delta=\frac{T}{m-1}$. The grid points are located at $\left\{\frac{2k-1}{2}\Delta\right\}_{k=1}^{m-1}$ and therefore we have $m$ intervals in total, among which both the first and last intervals have a length of $\frac{\Delta}{2}$. 
 The function values at each interval is denoted by $\boldsymbol{\lambda_m}^*=(\lambda_1^*,\dots,\lambda_m^*)$ and 
 $\Lambda^{*}_{m}$ denotes its corresponding cumulative intensity function. We then optimize the following:
\begin{align}
\operatorname*{arg\,max}_{m,\theta, \boldsymbol{\lambda_m}^*}(1-c)\cdot \log(p\left(s_1,\dots,s_N|\boldsymbol{\lambda_m}^*\right)) +c\cdot \log(\mathcal{TN}\left(\boldsymbol{\lambda_m}^*, \Lambda_m^*;\boldsymbol{0}, \boldsymbol{V}_\theta\right))
\label{eq:map}
\end{align}
where
$c$ is a constant (set to be 0.2 in this work based on empirical observations). 
We name this procedure weighted MAP method. For a fixed value of $m$, we optimize over $\boldsymbol{\lambda_m^*}$ and $\theta$ using R-DEoptim~\citep{mullen2011deoptim}. We repeat this optimization procedure with different choices for $m$ (e.g, $1-10$) and end with the optimal value of \cref{eq:map}.

\paragraph{Extending kernels to multiple dimensions.}\label{sec:multidim}
For point processes observed in higher dimensions, \citet{duvenaud2014automatic} suggests two main kernel constructions, which can be obtained multiplying or adding uni-dimensional kernels. 
For 2-dimesional data $(x_1, x_2), (x_1',x_2')\in \mathbbm{R}^2$, 
the product-kernel is defined as $k_1(x_1,x_1')\times k_2(x_2,x_2')$ and the additive-kernel is defined as $k_1(x_1,x_1')+k_2(x_2,x_2')$. Since a product-kernel offers more flexibility, we use it to model the intensity function for spatial data in the second real example in \cref{sec:realexm3}. 
For a Brownian motion kernel, the product-kernel we use is the two-dimensional Brownian Sheet~\citep{pyke1973partial} with covariance kernel: \begin{align}
k_{BS}((x_1, x_2), (x_1', x_2'))&=\cfrac{1}{\theta}\min(x_1,x_1')\cdot\min(x_2,x_2')\label{eq:BS}
\end{align} 
A squared exponential kernel is directly applicable to high dimensional  input spaces and categorized as a product-kernel.
The hyperparameters estimation and the posterior inference described above for uni-dimensional input space is directly applicable to multi-dimensional input space.

\section{Experiments}\label{sec:exp}
\subsection{Simulations}
We simulated 100 realizations from each of the following two Poisson processes with intensities:
\begin{align}
\begin{split}
     \lambda_1(s)=2\exp\left\{-s/15\right\}+\exp\left\{-((s-25)/10)^2\right\},\,s\in [0, 50]
\end{split}
\quad
\begin{split}
    \lambda_2(s)=10,\,s\in[0, 5].
\end{split}
\label{eq:equations}
\end{align}
For each realization, we compare our proposed random integral (RI) method with the Sigmoid Gaussian Cox Process method (SGCP) proposed by~\citet{adams2009tractable} and the INLA method \citep{rue2009approximate}.
For all the three methods, we assume Gaussian processes priors with a Brownian motion covariance kernel (RI/SGCP/INLA-BM) and place a noninformative conjugate gamma prior on the precision parameter $\theta$ with hyperparameters $\alpha=\beta=0.1$. Additionally, we run both the RI and SGCP methods with Gaussian processes with a squared exponential kernel and estimated hyperparamters with two methods: (1) via the weighted MAP method (RI/SGCP-MAP), as described in~\cref{sec:3.4}; (2) the oracle MLE method (RI/SGCP-MLE) where hyperparameters of the squared exponential kernel are estimated using the true intenstiy function values and observed events. We evaluate the performance of these methods according to sum of squared errors (SSE), coverage, credible intervals width and average running time among each 100 datasets. 
We compute the sum of square errors at grid points by summing up differences between the median of predicted intensity function values and ground truth values.  Coverage at grid points is the percentage of points whose ground truth intensity function values lying between the $95\%$ credible intervals.
Credible interval width is computed as the average width across all grid points. In addition to the simulations under the trajectories of~\cref{eq:equations}, we tested our method with double and triple of the intensities in~\cref{eq:equations}. Corresponding results can be found in~\cref{appendix:addres}. 

Results for one simulated dataset with intensity $\lambda_1(s)$ (\cref{eq:equations}) and 50 number of events, are depicted in the top row of \cref{fig:syn1}, with squared exponential kernels (\cref{fig:syn1}A) and Brownian motion kernels (\cref{fig:syn1}B). Although all methods have comparable performance, INLA-BM (green shaded region in \cref{fig:syn1}B) shows higher uncertainty than any other method. This is consistent across 100 simulations, where INLA's average credible interval width is 1.44 versus 1.20 with our method (see \cref{table:syn1compare}, first 3 rows, last column). Performance statistics based on 100 simulations with intensity $\lambda_1(s)$ shown in \cref{table:syn1compare} indicate that our method is the best performing method among those based on Brownian motion kernels in terms of median SSE. Similarly, for squared exponential kernels with MAP hyperparameter estimation, our method outperforms SGCP in median SSE and coverage. 
Results for one simulated dataset with intensity $\lambda_2(s)$ (\cref{eq:equations}) with 51 events are depicted in the bottom row of \cref{fig:syn1}. Similarly to estimation of $\lambda_1(s)$, INLA-BM displays large uncertainty (\cref{fig:syn1}B).  This again is consistent across 100 simulations for $\lambda_2(s)$ (see \cref{table:syn1compare}, row 8-10, last column.). In terms of median SSE and coverage, our method consistently outperforms others across 100 simulations for $\lambda_2(s)$ (\cref{table:syn1compare}). 
Based on comparisons among the three models over the two synthetic examples, it is concluded that both RI and SGCP outperforms INLA, and RI has a better performance than SGCP for most cases. In this work, methods using Brownian motion kernels display larger uncertainty compared to those using squared exponential kernels. This may be due to the fact that hyperparameters of Brownian motion kernels are estimated within the MCMC procedure, while hyperparameters of squared exponential kernels are fixed with estimates. 

We implement our algorithm in Python and run it on a shared computing cluster consisting of 24 CPU cores and 191 GB of memory and set a time limit of 7 days. For the RI method on temporal data, we run a total of 50,000 iterations after 10,000 iterations of burn-in. For the SGCP model, each iteration took more than 10,000 times longer than the RI method and for this reason, we were only able to obtain about 10,000 iterations after 10,000 iterations of burn-in. However, some results shown in \cref{appendix:addres} of the SGCP method could not be reported. For the second spatial data real example in \cref{sec:realexm3} we run a total of 100,000 iterations after 100,000 iterations of burn-in.
Further, we report average running times among 100 simulated datasets per 10,000 iterations for all MCMC algorithms in both RI and SGCP methods for both synthetic examples in \cref{table:syn1runningtime}. To clarify, for INLA-BM, we report average total running times among 100 simulated datasets.
Though INLA runs fastest, RI has achieved a large improvement over SGCP in terms of running time, which is comparable with INLA. For example, for intensity $\lambda_1(s)$, on average, INLA-BM takes 4.80 s in total and RI-BM takes 15.84 s in total, whereas SGCP-BM takes 13.97 hours in total.   
\begin{figure}
  \centerline{
  \includegraphics[width=0.38\textwidth]{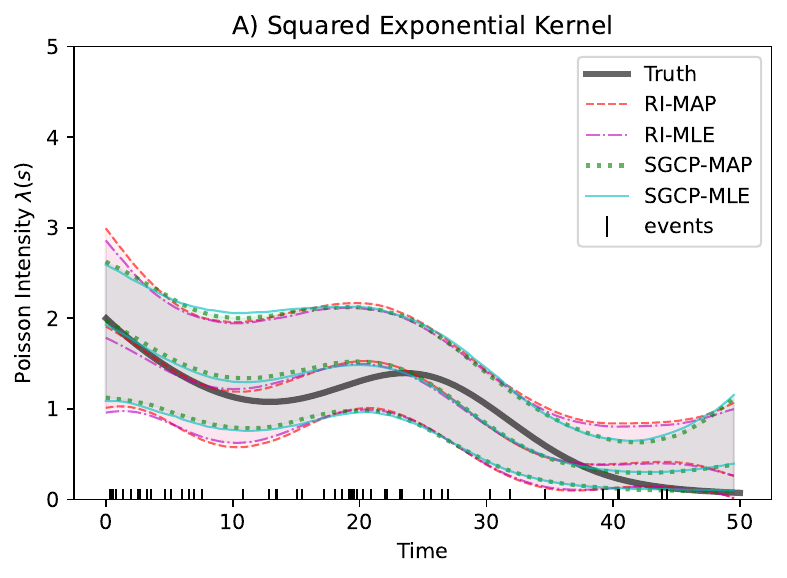}
  \includegraphics[width=0.38\textwidth]{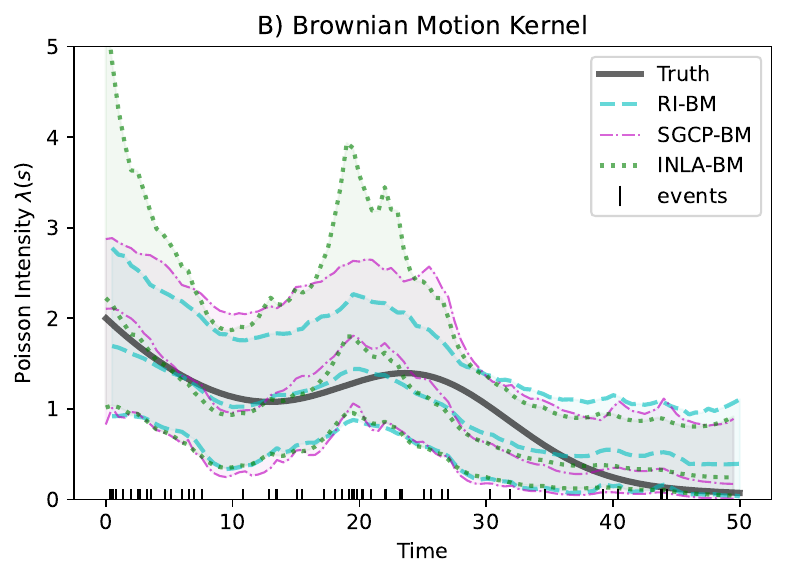}
  }
  \centerline{
  \includegraphics[width=0.38\textwidth]{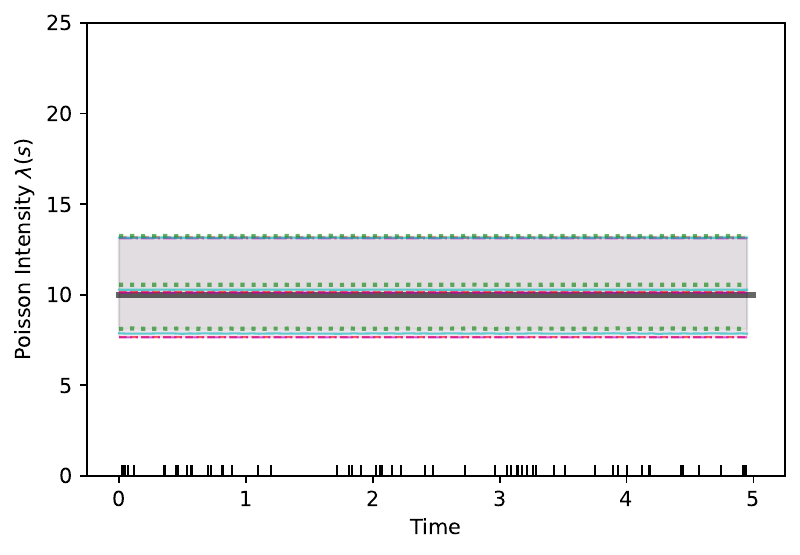}
  \includegraphics[width=0.38\textwidth]{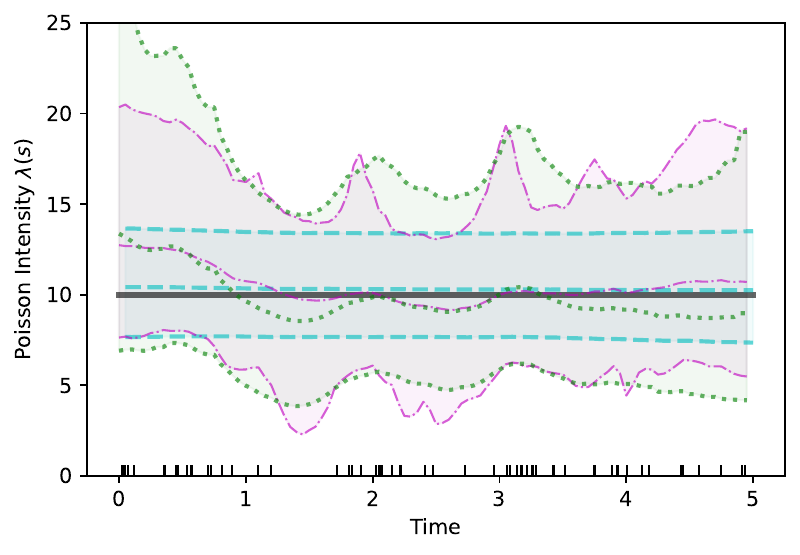}
  }
  \captionsetup{font=footnotesize}
  \vspace{\baselineskip} 
  \caption{
  Top row corresponds to posterior inference of intensity function $\lambda_1(s)$ from one simulated dataset of 50 events, while bottom row corresponds to  posterior inference of intensity function $\lambda_2(s)$ from one simulated dataset of 51 events. True trajectory is depicted by solid black curve, posterior medians by dashed central curves and $95\%$ CI by shaded regions. Time of simulated events are shown as tick marks at the bottom of each plot. Different methods are distinguished by colors as described in legend boxes, and images in the same column share the same lengend box. Methods used in panel A assume squared exponential kernels and those in panel B assume Brownian motion covariance kernels.}
  \label{fig:syn1}
\end{figure}

\begin{table}
\captionsetup{font=footnotesize}
\caption{Performance comparison on simulations with intensities $\lambda_1(s)$ and $\lambda_2(s)$. The last three columns present quantities in the format: 0.50 quantile (0.025 quantile, 0.25 quantile, 0.75 quantile, 0.975 quantile). Bold is the best among methods with the same kernel and red is the best among all methods.}

\vspace{1\baselineskip} 
\label{table:syn1compare}
\centering
\begin{adjustbox}{width=\textwidth} 
\begin{tabular}{p{1.2cm}p{1.8cm}p{5.6cm}p{4.9cm}p{4.6cm}} 
\toprule[1.5pt]
\textbf{Intensity} & \textbf{Methods} & \textbf{SSE at 100 Grids} & \textbf{Coverage at 100 Grids} & \textbf{Credible Interval Width} \\
\midrule
\multirow{7}{*}{$\lambda_1$} & RI-BM & \textbf{7.30} (2.80, 5.22,\;\textcolor{red}{\textbf{9.29}},\; \textcolor{red}{\textbf{14.56}}) & 98\%\;\;\;(\textcolor{red}{\textbf{88\%}}, 95\%, \;\textcolor{red}{\textbf{100\%}}, \textcolor{red}{\textbf{100\%}}) & 1.20 (1.06, 1.15, 1.26, 1.42) \\
& SGCP-BM & 8.87 (\textbf{2.38}, \textbf{5.19}, 12.54, 25.33) & \textcolor{red}{\textbf{100\%}} (82\%, \textcolor{red}{\textbf{100\%}},\textcolor{red}{\textbf{100\%}}, \textcolor{red}{\textbf{100\%}}) & 1.35 (1.11, 1.24, 1.43, 1.71) \\
& INLA-BM & 9.01 (2.92, 6.18, 13.17, 20.94) & \textcolor{red}{\textbf{100\%}} (77\%, 98\%,\; \,\textcolor{red}{\textbf{100\%}}, \textcolor{red}{\textbf{100\%}}) & 1.44 (1.12, 1.31, 1.59, 1.83) \\
\cmidrule(r){2-5}
& RI-MAP & \textbf{7.97} (2.92, \textbf{5.60}, 11.51, 18.45) & \textbf{88\%} \;\;(\textbf{53\%}, \textbf{72\%},\; \textbf{99\%}, \;\textcolor{red}{\textbf{100\%}}) & 0.84 (0.58, 0.71, 1.03, 1.31) \\
& SGCP-MAP & 8.76 (\textbf{2.53}, 6.73, \textbf{11.02}, \textbf{18.31}) & 77\% \;\;\;(41\%, 59\%, \;\;89\%, \;\,\textcolor{red}{\textbf{100\%}}) & 0.84 (0.61, 0.73, 0.99, 1.30) \\
\cmidrule(r){2-5}
& RI-MLE & 7.17 (\textcolor{red}{\textbf{1.42}}, 4.49, \;\textbf{9.62}, \;\textbf{14.76}) & \textcolor{red}{\textbf{100\%}} \;(\textbf{72\%}, \textbf{94\%}, \textcolor{red}{\textbf{100\%}}, \textcolor{red}{\textbf{100\%}}) & 1.04 (0.92, 0.98, 1.07, 1.12) \\
& SGCP-MLE & \textcolor{red}{\textbf{6.33}} (1.61, \textcolor{red}{\textbf{4.14}}, \;9.72,\; 17.30) & \textcolor{red}{\textbf{100\%}} \;(64\%, \;88\%, \textcolor{red}{\textbf{100\%}}, \textcolor{red}{\textbf{100\%}}) & 0.97 (0.85, 0.93, 1.03, 1.09) \\
\midrule[1.5pt]
\multirow{7}{*}{$\lambda_2$} &RI-BM &\textbf{76.63}\;\;\,(\textbf{2.54}, \;\;\textbf{28.65},\; \textbf{220.65},\;\, \textcolor{red}{\textbf{563.71}})& \textcolor{red}{\textbf{100\%}}(\textcolor{red}{\textbf{100\%}},\textcolor{red}{\textbf{100\%}},\textcolor{red}{\textbf{100\%}},\textcolor{red}{\textbf{100\%}})&6.29\;\;\;(5.43,\;\;5.95,\;6.74,\;\;10.18) \\
&SGCP-BM & 231.96  (32.98,  106.33,    429.64,   1720.45)& \textcolor{red}{\textbf{100\%}}(   88\%,\;\;\textcolor{red}{\textbf{100\%}},\textcolor{red}{\textbf{100\%}},\textcolor{red}{\textbf{100\%}}) &9.13\;\;\;(7.79,\;\;8.56,\;10.30,\,13.32) \\
 &INLA-BM &328.58  (74.24,  173.47,     570.52,  1441.48)& \textcolor{red}{\textbf{100\%}}(   95\%,\;\;\textcolor{red}{\textbf{100\%}},\textcolor{red}{\textbf{100\%}},\textcolor{red}{\textbf{100\%}})&12.04 (10.07,11.36,12.90,16.15)\\\cmidrule(r){2-5}
 
 & RI-MAP& \textbf{121.56}  (0.20,   \;\; 36.37, \;\textbf{298.73},\;\textbf{1139.25})
 &\textcolor{red}{\textbf{100\%}}(  \textbf{73\%},\;\;\textcolor{red}{\textbf{100\%}},\textcolor{red}{\textbf{100\%}},\textcolor{red}{\textbf{100\%}})&5.60\;\;\;(4.81,\;\;\;5.26,\;\;5.87,\;\;9.63)\\
 &SGCP-MAP& 145.73  (\textcolor{red}{ \textbf{0.19}}, \;\; \textbf{33.50}, \;    444.92,  1717.75)
 &\textcolor{red}{\textbf{100\%}}(   50\%,\;\;\textcolor{red}{\textbf{100\%}},\textcolor{red}{\textbf{100\%}},\textcolor{red}{\textbf{100\%}})&5.82\;\;\;(4.78,\;\;\;5.43,\;\;6.30,\;11.30)\\\cmidrule(r){2-5}
 & RI-MLE& \textcolor{red}{\textbf{70.82}}\;\;\;(1.14, \;\;\textcolor{red}{\textbf{21.04}}, \;\textcolor{red}{\textbf{216.55}}, \;    854.88)
 &
\textcolor{red}{\textbf{100\%}}(\textcolor{red}{\textbf{100\%}},\;\textcolor{red}{\textbf{100\%}},\textcolor{red}{\textbf{100\%}},\textcolor{red}{\textbf{100\%}})&5.84\;\;\;(4.85,\;\;\;5.41,\;\;6.88,\;14.23)\\
& SGCP-MLE& 105.56\;(\textbf{0.86},\;\;\;37.58,    \;265.23, \;\textbf{826.11})
 &
\textcolor{red}{\textbf{100\%}}(     0\%,\;\;\;\;\textcolor{red}{\textbf{100\%}},\textcolor{red}{\textbf{100\%}},\textcolor{red}{\textbf{100\%}})&5.64\;\;\;(4.77,\;\;\;5.30,\;\;5.95,\;\;6.71)\\
\bottomrule[1.5pt]
\end{tabular}
\end{adjustbox}
\vspace{1\baselineskip} 
\end{table}
\begin{wraptable}{r}{0.5\linewidth} 
    \centering
    \captionsetup{font=footnotesize}
    \caption{Running time for estimating $\lambda_1(s)$ and $\lambda_2(s)$.}
    \label{table:syn1runningtime}
    \vspace{0.5\baselineskip}
    \begin{adjustbox}{max width=\linewidth}
        \begin{tabular}{l>{\centering\arraybackslash}p{3.5cm}>{\centering\arraybackslash}p{3.5cm}}
            \toprule
            \textbf{Methods} & \multicolumn{2}{c}{Average Time $\pm$ Standard Deviation}  \\
            \cmidrule(r){2-3}
            & $\lambda_1$ & $\lambda_2$ \\
            \midrule
            RI-BM & 2.64 $\pm$ 0.11 s  &  2.28 $\pm$ 0.90 s    \\
            SGCP-BM & 28146.07 $\pm$ 4862.67 s & 5724.21 $\pm$ 1025.27 s  \\
            INLA-BM & 4.80 $\pm$ 0.19 s &  4.51 $\pm$ 1.02 s         \\
            \midrule
            RI-MAP & 3.17 $\pm$ 0.44 s  & 3.58 $\pm$ 0.98 s   \\
            SGCP-MAP & 127598.22 $\pm$ 36866.47 s & 41339.5 $\pm$ 12600.63 s     \\
            \midrule
            RI-MLE &3.49 $\pm$ 0.33 s &2.79$\pm$  0.27 s        \\
            SGCP-MLE & 169883.37 $\pm$ 28625.97 s  & 41057.43 $\pm$ 8801.7 s      \\
            \bottomrule
        \end{tabular}
    \end{adjustbox}
\end{wraptable}

\subsection{ Earthquakes in Japan}\label{sec:earthquake}
We are interested in estimating the intensity rate of earthquakes in Japan during the year 2019 with magnitude of at least 2.5. The dataset is gathered from the U.S. Geological Survey (2020)\footnote{from the website https://earthquake.usgs.gov/fdsnws/event/1/}, including 901 time points. Analogous to the synthetic examples, we run the RI model with both a Brownian motion covariance kernel and a squared exponential kernel and present the corresponding results in the top two panels of \cref{fig:real2} (blue dashed lines). We observe that this earthquake temporal point process approximates a homogeneous Poisson process with intensity being 2.50 and that there are more frequent earthquakes during the fourth quarter.

\begin{figure}
  \centerline{
  \includegraphics[width=0.38\textwidth]{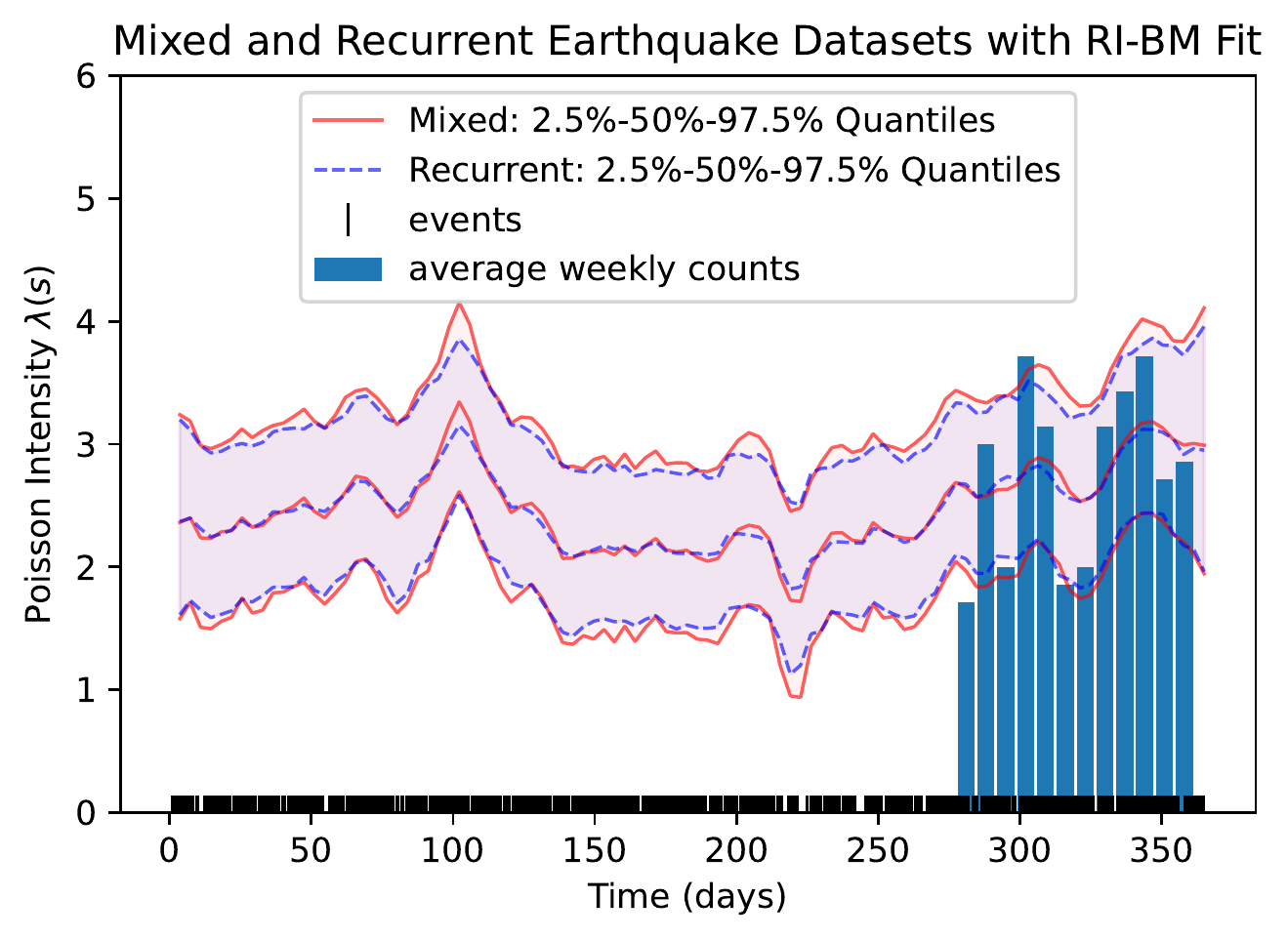}
  \includegraphics[width=0.38\textwidth]{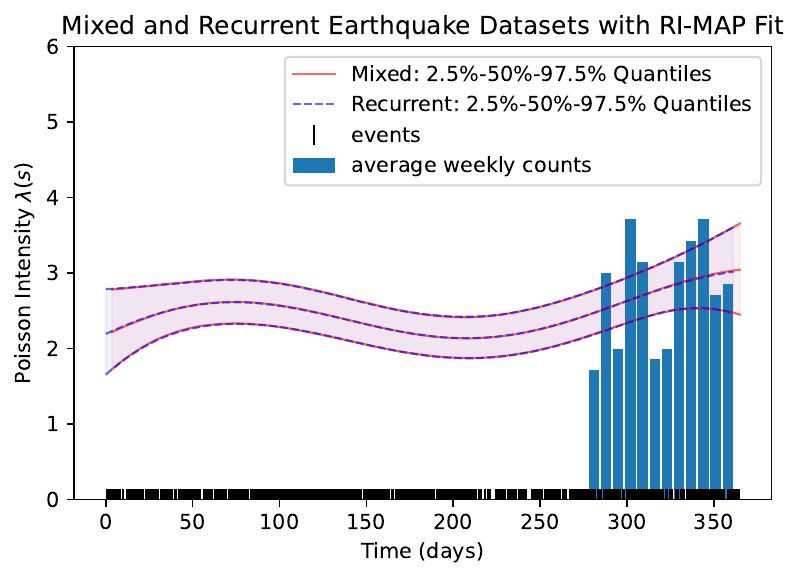}

  }
 \centerline{
  \includegraphics[width=0.38\textwidth]{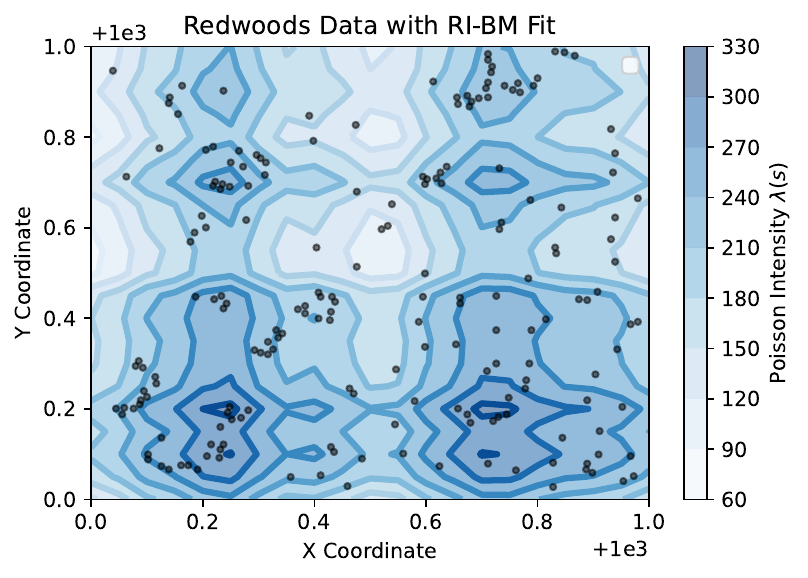}
  \includegraphics[width=0.38\textwidth]{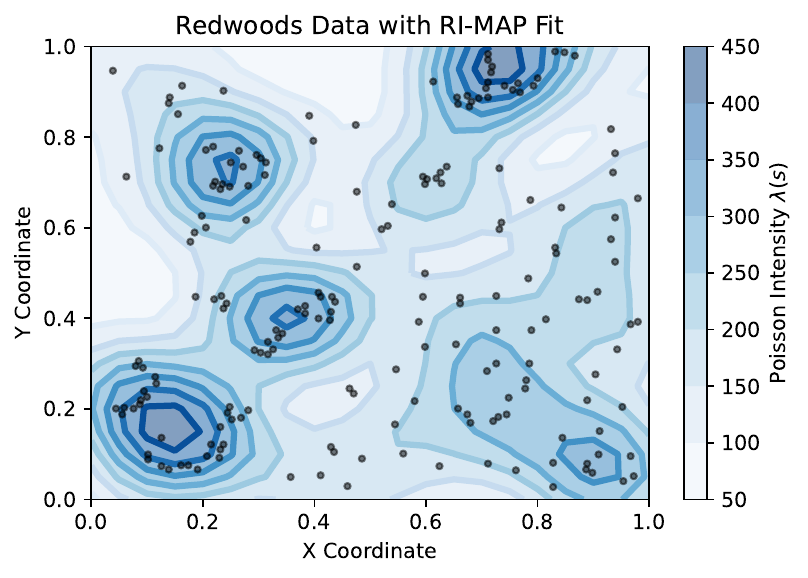}

  }
  \captionsetup{font=footnotesize}
  \vspace{\baselineskip}
  \caption{Posterior inferences for real examples from~\cref{sec:earthquake,sec:realexm3,sec:mixed}.
  Bottom panels report median among posterior samples of the latent GP for the redwoods spatial dataset and locations of observed events are represented by black dots, while top panels present $2.5\%-50\%-97.5\%$ posterior quantiles of the intensity function for both recurrent and mixed earthquake data in Japan and time of observed events are depicted as tick marks at the bottom of each plot. Among each row, the left panel corresponds to results using a Brownian motion covariance kernel and the right panel corresponds to results using a squared exponential kernel with MAP.}\label{fig:real2}
\end{figure}
\subsection{ Redwoods}\label{sec:realexm3}
We reanalyze the redwoods spatial dataset that consists of the locations of seedlings and saplings of California Giant Redwoods in a square sampling region (approximately 130 feet across). It was first described and analysed in 1975 by~\citet{strauss1975model}. The R dataset `redwoodfull' contains the full point pattern of 195 trees and the window has been rescaled to the unit square $[0,1]\times[0,1]$.
Here we use both the Brownian motion covariance kernel and the squared exponential kernel for GP priors on the intensity function. For a Brownian motion covariance kernel, in order to inherit its computation efficiency we resort to the two-dimensional Brownian Sheet with boundary correction as described in \cref{sec:multidim}. To fix the rank deficiency problem of $\tilde{Q}$, we added a small perturbation to its smallest diagonal term. In practice, the original unit square redwoods data requires the perturbation term to be at least $10^{-4}$ in order to obtain a quite accurate estimate of $\tilde{Q}^{-1}$, however leading to a very slow convergent latent GP. Therefore, we place the original unit square redwoods data to the unit square $(1000,1001)\times(1000,1001)$. Results are presented in the bottom two panels of~\cref{fig:real2}. We observe that both methods capture spatial concentration patterns, though there is disagreement in some of the concentrations.

\subsection{ Mixed recurrent and count Data  }\label{sec:mixed}
To exemplify the applicability of our methodology to a mixture of recurrent event and count data, we modified the earthquake data previously analyzed in~\cref{sec:earthquake}. Specifically, we assume the exact recurrent event time points are observed within the first 281 days and the rest of the year (84 days) is reported as weekly counts (12 weeks). In this case, we a priori model the vector $\boldsymbol{\tilde{\lambda}}\coloneqq \left[ \lambda(x_1),\dots, \lambda(x_M), \int_{B_1}\lambda(s)ds,\dots, \int_{B_{12}}\lambda(s)ds 
      \right]'$ as positive multivariate Gaussian. As before, $\left\{x_i\right\}_{i=1}^M$ are the locations of interest, and the $B_j$s correspond to the weekly time intervals (from `00:00:00’ of day $281+7(j-1)$ to `00:00:00’ of day $288+7(j-1)$). Posterior results are also depicted as red curves in the top panel of \cref{fig:real2}. Remarkably, we obtain very similar results to those obtained from the higher resolution recurrent event data (blue curves) analyzed in~\cref{sec:earthquake}.

\section{Discussion}\label{sec:dis}
In this work, we propose an exact nonparametric Bayesian approach to estimate the intensity function of an inhomogeneous Poisson process. The key insight in our method is the realization that if the intensity function is a priori modeled as a Gaussian process, then the Poisson likelihood is a function of a latent Gaussian vector consisting of intensity function values and the cumulative intensity function over the observed set. Since the intensity function needs to be positive, we further restrict the latent vector to being positive. This positive restriction however, does not increase the sampling complexity of our MCMC approach.

The greatest advantage of our method is that it does not require discretization, expensive data augmentation, or variational approximations. Our proposed method is most similar to both \citet{kottas2006dirichlet} and \citet{samo2015scalable}, in that we treat the cumulative intensity function as a latent random variable, and target a posterior distribution augmented by the cumulative intensity function.
Both \citet{kottas2006dirichlet} and \citet{samo2015scalable} reduce time complexity from cubic in the number of both thinning and observed data points in \citet{adams2009tractable} to linear in the number of observed data points for each MCMC iteration. In our proposed method,  attributed to the special structure of Brownian motion kernel covariance, we only need to compute all covariance matrices once during the MCMC procedure, bringing improvement in time efficiency. On the other hand, one limitation of our approach is the requirement of integrable covariance kernels.

Data augmentation methods via Poisson thinning \citep{adams2009tractable,donner2018efficient} require a finite bound and a tractable acceptance probability, restricting its applicability. For example, \citet{teh2011gaussian} mentions the bottleneck of applying the data augmentation via thinning method to renewal processes with unbounded hazard functions. In contrast, the random integral method can handle both unbounded and more complex intractable cases. A future direction is extending the random integral method to more general point processes such as those used in survival analysis~\citep{barrett2013gaussian,fernandez2016gaussian,martino2011approximate}. We anticipate our method will be able to naturally accommodate censored data and spatio-temporally varying covariates.

Finally, we showed that our method can be extended to model mixed data consisting of recurrent event data and count data over binned intervals. Surprisingly, we showed an example in which binning some of the observations resulted in no loss of information about the intensity function. This feature not only increases the applicability of nonparametric Bayesian Cox processes to more realistic scenarios, but it can also be exploited for increasing scalability of our method. We anticipate that it is possible to devise a strategy for binning observations in order to increase the applicability of our method to large datasets.
\newpage
\bibliographystyle{abbrvnat}
\bibliography{biblio}


\appendix
\section{ Appendix: joint distribution of function values and integral for Gaussian processes}\label{appendix:integral}
First, we state a key intermediate result from~\citet{morters2010brownian} that we will use to prove \cref{thm:integral}:
\begin{proposition}{[Proposition 12.15 in \citet{morters2010brownian}]}
Suppose $\left\{X_n : n \in N\right\}$ is a sequence of Gaussian random vectors and $X_n \xrightarrow[]{a.s.} X$.
If $b := lim_{n\rightarrow \infty}E [X_n]$ and $C := lim_{n\rightarrow \infty} Cov [X_n]$ exist,
then $X$ is Gaussian with mean $b$ and covariance matrix $C$.
\label{lm:gaussian rv conv}
\end{proposition}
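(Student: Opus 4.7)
The plan is to prove the proposition using characteristic functions together with the fact that almost sure convergence implies convergence in distribution. For each $n$, the random vector $X_n$ is Gaussian with mean $b_n := E[X_n]$ and covariance $C_n := \operatorname{Cov}[X_n]$ (both finite, since any Gaussian vector has finite second moments), so its characteristic function admits the explicit closed form
\[
\phi_{X_n}(t) \;=\; \exp\!\bigl(i\, t^\top b_n \;-\; \tfrac{1}{2}\, t^\top C_n\, t\bigr), \qquad t\in\mathbb{R}^d.
\]
Using the hypotheses $b_n\to b$ and $C_n\to C$ (entrywise) together with continuity of the exponential, this converges pointwise to $\exp\!\bigl(i\, t^\top b - \tfrac{1}{2}\, t^\top C\, t\bigr)$ for every $t$.

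Next, I would identify this pointwise limit with $\phi_X$. Since $X_n \to X$ almost surely and $\lvert e^{i t^\top X_n}\rvert \le 1$, the dominated convergence theorem yields $\phi_{X_n}(t) = E[e^{i t^\top X_n}] \to E[e^{i t^\top X}] = \phi_X(t)$ for every $t\in\mathbb{R}^d$. Combining the two displays gives
\[
\phi_X(t) \;=\; \exp\!\bigl(i\, t^\top b \;-\; \tfrac{1}{2}\, t^\top C\, t\bigr),
\]
and the uniqueness theorem for characteristic functions then forces $X \sim \mathcal{N}(b, C)$.

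The only step requiring care is verifying that $C$ is a legitimate covariance matrix, i.e.\ symmetric and positive semidefinite, so that the right-hand side above is actually a Gaussian characteristic function. Symmetry of $C$ is inherited from symmetry of each $C_n$ by taking limits entrywise. For positive semidefiniteness, any $v\in\mathbb{R}^d$ satisfies $v^\top C\, v = \lim_{n\to\infty} v^\top C_n\, v \ge 0$ since each $C_n$ is positive semidefinite. This closes the argument.

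The main potential subtlety is simply tracking the mode of convergence at each step (a.s.\ convergence of $X_n$, entrywise convergence of $b_n$ and $C_n$, pointwise convergence of characteristic functions) and invoking the correct uniqueness/continuity theorem; no deeper tools beyond dominated convergence and the L\'evy uniqueness theorem are required. Note that the hypothesis that $b$ and $C$ \emph{exist} is essential, since almost sure convergence alone does not imply convergence of means and covariances without a uniform integrability condition.
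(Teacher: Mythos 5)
Your proof is correct and follows essentially the same route as the paper's: compute the pointwise limit of the Gaussian characteristic functions $\exp\{it'\mu_n-\tfrac12 t'\Sigma_n t\}$, identify it with $\psi_X(t)$ via almost sure convergence, and conclude. You add a few details the paper leaves implicit (dominated convergence for $\psi_{X_n}\to\psi_X$, verification that $C$ is symmetric positive semidefinite, and the explicit appeal to the uniqueness theorem), which only strengthens the argument.
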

%
\begin{proof}[\textbf{\upshape Proof:}]
Let $\mu_n$ denote $E [X_n]$, $\Sigma_n$ denote $Cov [X_n]$, and $\psi$ denote the characteristic function. 
Since $X_n \xrightarrow[]{a.s.} X$ implies that $\psi_{X_n}(t)\rightarrow \psi_X(t)$ for each $t$, the rest work is to derive $\lim_{n\rightarrow \infty}\psi_{X_n}(t)$. 
\begin{align*}
    \lim_{n\rightarrow \infty}\psi_{X_n}(t)&=\lim_{n\rightarrow \infty} \exp\left\{it'\mu_n-\frac{1}{2}t'\Sigma_n t\right\} \\
    &=\exp\left\{it' b-\frac{1}{2}t' C t\right\} 
\end{align*}
The last step is due to the continuity of $ \exp\left\{it'\mu_n-\frac{1}{2}t'\Sigma_n t\right\}  $ w.r.t. components of $\mu_n$ and $\Sigma_n$.
Now we have $$\psi_X(t)=\exp\left\{it' b-\frac{1}{2}t' C t\right\}. $$
The result then follows.
\end{proof}

\abc*

\begin{proof}[\textbf{\upshape Proof:}]

It is well known that any affine transformation of a multivariate normal distribution is still a normal distribution. That is to say, if $X \sim \mathcal{N}(X; \mu, \Sigma)$, then
   $B X\sim \mathcal{N}(BX; B\mu, B\Sigma B')$, where $X$ is a $n$-dimensional random vector and $B$ is a $m\times n$ constant matrix. 

\noindent For presentation simplicity, here we only focus on the case when $\mathcal{X}=[0,T]$, i.e., an uni-dimensional space. The proof can be easily extended to a higher finite dimension.
Because $f(s)$ is a continuous function, we can define the integral term $\int_\mathcal{X}f(s)ds $ as the limit of the Riemann sum: $$\int_0^T f(s)ds\coloneq \lim_{n\rightarrow \infty} \frac{T}{n}\sum_{i=1}^n f\left(\frac{i}{n}T\right).$$
Now, define $X_{n}=(f\left(s_1\right),
\ldots,f\left(s_p\right),f\left(\frac{1}{n}T\right),\ldots,
f\left(T\right))’$, then
\begin{align*}
X_n 
\sim \mathcal{N}\left(\tilde{\mu}_n= \begin{pmatrix}
\mu(s_1) \\
\vdots\\
\mu(s_p)\\
\mu(\frac{1}{n}T)\\
\vdots\\
\mu(T)
\end{pmatrix},\tilde{\Sigma}_n=\begin{pmatrix}
k\left(s_1, s_1 \right) &\dots& k\left(s_1, s_p\right)&k\left(s_1, \frac{1}{n}T\right) & \dots &k\left(s_1, T\right)\\
\vdots&\ddots&\vdots&\vdots&\ddots&\vdots\\
k\left(s_p, s_1\right) & \dots&k\left(s_p, s_p\right)  & k\left(s_p, \frac{1}{n}T\right)&\dots& k\left(s_p, T\right) \\
k\left(\frac{1}{n}T, s_1\right)&\dots&k\left(\frac{1}{n}T, s_p\right)  & k\left(\frac{1}{n}T,\frac{1}{n}T\right)&\dots& k\left(\frac{1}{n}T, T\right) \\
\vdots&\ddots&\vdots&\vdots&\ddots&\vdots\\
k\left(T,s_1\right) &\dots& k\left(T, s_p\right)  & k\left(T,\frac{1}{n}T\right)&\dots& k\left(T, T\right) \\
\end{pmatrix} \right) 
\end{align*}

Let $A=\begin{pmatrix}
\textbf{I}_{p\times p}&\mathbf{0}_{p\times n}\\
\mathbf{0}_{1\times p}&\mathbf{\left(\frac{T}{n}\right)}_{1\times n}
\end{pmatrix}$, 
then $AX_n= \begin{pmatrix}
f\left(s_1\right) \\
\vdots\\
f\left(s_p\right)\\
\frac{T}{n}\sum_{i=1}^n f\left(\frac{i}{n}T\right)\\
\end{pmatrix}  \xrightarrow[]{a.s.}  \begin{pmatrix}
f\left(s_1\right) \\
\vdots\\
f\left(s_p\right)\\
\int_0^T f(s) ds\\
\end{pmatrix} $.

Since $AX_n$ is an affine transformation of a multivariate normal distribution, $$AX_n\sim \mathcal{N}(A\tilde{\mu}_n,A\tilde{\Sigma}_nA'),$$

where $A\tilde{\mu}_n=\left( \begin{array}{ccc}
\mu\left(s_1\right) \\
\vdots\\
\mu\left(s_p\right)\\
\frac{T}{n}\sum_{i=1}^n \mu\left(\frac{i}{n}T\right)\\
\end{array} \right) \xrightarrow[]{n\rightarrow \infty}\left( \begin{array}{ccc}
\mu\left(s_1\right) \\
\vdots\\
\mu\left(s_p\right)\\
\int_0^T\mu(s)ds\\
\end{array} \right) $, 

and $A\tilde{\Sigma}_nA'=\begin{pmatrix}
k\left(s_1, s_1 \right) &\dots& k\left(s_1, s_p\right)&\frac{T}{n}\sum_{i=1}^n k\left(s_1, \frac{i}{n}T\right) \\
\vdots&\ddots&\vdots&\vdots\\
k\left(s_p, s_1\right) & \dots&k\left(s_p, s_p\right)  & \frac{T}{n}\sum_{i=1}^n k\left(s_p, \frac{i}{n}T\right) \\
\frac{T}{n}\sum_{i=1}^n k\left( \frac{i}{n}T, s_1\right)&\dots&\frac{T}{n}\sum_{i=1}^n k\left( \frac{i}{n}T, s_p\right)  & \left(\frac{T}{n}\right)^2 \sum_{j=1}^n\sum_{i=1}^n k\left(\frac{i}{n}T,\frac{j}{n}T\right)\\
\end{pmatrix} \xrightarrow[]{n\rightarrow \infty} $

$\begin{pmatrix}
k\left(s_1, s_1 \right) &\dots& k\left(s_1, s_p\right)&\int_0^T k(s_1,t)dt \\
\vdots&\ddots&\vdots&\vdots\\
k\left(s_p, s_1\right) & \dots&k\left(s_p, s_p\right)  & \int_0^T k(s_p,t)dt \\
\int_0^T k(t, s_1)dt&\dots&\int_0^T k(t,s_p)dt  & \int_0^T\int_0^T k(s,t)dsdt\\
\end{pmatrix} $.
\\~\\
The result then follows from \cref{lm:gaussian rv conv}.

\end{proof}

\section{Appendix : kernel integral}\label{appendix:integralcomp}
In this section, we would show how to derive the integral terms in the covariance matrix in \cref{thm:integral} for two specific kernel functions.
\subsection{Squared exponential kernels }
For a Gaussian process with a squared exponential kernel, here we show the derivation for $\int_0^T k_{SE}(s,t)dt$ and $\int_0^T\int_0^T k_{SE}(s,t)ds dt$.
We first claim three facts about the Gauss error function $$erf(z)\coloneq\frac{2}{\sqrt{\pi}}\int_0^z \exp\left\{-t^2\right\}dt.$$
\paragraph{FACT 1:} $\int_p^q\exp\left\{-\frac{x^2}{2}\right\} dx=\sqrt{\frac{\pi}{2}}\left(erf\left(q/\sqrt{2}\right)-erf\left(p/\sqrt{2}\right)\right)$
\paragraph{FACT 2:}$\int erf(z)dz=z\, erf(z)+\frac{\exp\left\{-z^2\right\}}{\sqrt{\pi}}+c$
\paragraph{FACT 3:} $erf(-z)=-erf(z)$

Utilizing the three facts, we compute both single and double integral terms as below. 
\begin{align*}
    &\int_0^T k_{SE}(s,t)dt\\
    &=\int_0^T\theta_0\exp\left({-\frac{\theta_1(s-t)^2}{2}}\right)dt\\
    &=\frac{\theta_0}{\sqrt{\theta_1}}\int_0^T\exp\left({-\frac{\theta_1(t-s)^2}{2}}\right)  d\sqrt{\theta_1}(t-s)\\
    &=\frac{\theta_0}{\sqrt{\theta_1}}\int_{-\sqrt{\theta_1}s}^{\sqrt{\theta_1}(T-s)} \exp\left\{-\frac{v^2}{2}\right\} dv\\
    &=\frac{\theta_0}{\sqrt{\theta_1}}\sqrt{\frac{\pi}{2}}\left[erf\left( \sqrt{\frac{\theta_1}{2}}\left(T-s\right)\right)-erf\left(-\sqrt{\frac{\theta_1}{2}}s \right)\right]\\
    &=\theta_0\sqrt{\frac{\pi}{2\theta_1}}\left[erf\left(\sqrt{\frac{\theta_1}{2}}\left(T-s\right)\right)+erf\left(\sqrt{\frac{\theta_1}{2}}s\right)\right]
\end{align*}
where $v\coloneq\sqrt{\theta_1}(t-s)$.
\begin{align*}
&\int_0^T\int_0^T k_{SE}(s,t)dt ds\\&=\int_0^T \theta_0\sqrt{\frac{\pi}{2\theta_1}}\left[erf\left(\sqrt{\frac{\theta_1}{2}}\left(T-s\right)\right)+erf\left(\sqrt{\frac{\theta_1}{2}}s\right)\right]ds\\
&=\theta_0\sqrt{\frac{\pi}{2\theta_1}}\left[\int_0^T erf\left(\sqrt{\frac{\theta_1}{2}}\left(T-s\right)\right) ds + \int_0^T erf\left(\sqrt{\frac{\theta_1}{2}}s\right) ds\right]\\
&=\theta_0\sqrt{\frac{\pi}{2\theta_1}}\left[-\sqrt{\frac{2}{\theta_1}}\int_0^T erf\left(\sqrt{\frac{\theta_1}{2}}\left(T-s\right)\right) d\sqrt{\frac{\theta_1}{2}}(T-s) + \sqrt{\frac{2}{\theta_1}}\int_0^T erf\left(\sqrt{\frac{\theta_1}{2}}s\right) d \sqrt{\frac{\theta_1}{2}}s\right]\\
&=\theta_0\sqrt{\frac{\pi}{2\theta_1}}\left[-\sqrt{\frac{2}{\theta_1}}\int_0^T erf\left(\sqrt{\frac{\theta_1}{2}}\left(T-s\right)\right) d\sqrt{\frac{\theta_1}{2}}(T-s) + \sqrt{\frac{2}{\theta_1}}\int_0^T erf\left(\sqrt{\frac{\theta_1}{2}}s\right) d \sqrt{\frac{\theta_1}{2}}s\right]\\
&=\frac{\sqrt{\pi}\theta_0}{\theta_1}\left[ \int_0^{\sqrt{\frac{\theta_1}{2}}T} erf\left(u \right)du+ \int_0^{\sqrt{\frac{\theta_1}{2}}T} erf\left(v\right) dv   \right]\\
&=\frac{2\sqrt{\pi}\theta_0}{\theta_1}\left[u\;erf(u)+\frac{\exp\left\{-u^2 \right\}}{\sqrt{\pi}} \Bigg\vert_0^{\sqrt{\frac{\theta_1}{2}}T}\right]\\
&=\frac{2\sqrt{\pi}\theta_0}{\theta_1}\left[\sqrt{\frac{\theta_1}{2}}T\;erf\left(\sqrt{\frac{\theta_1}{2}}T\right)+\frac{\exp\left\{-\frac{\theta_1}{2} T^2 \right\}}{\sqrt{\pi}} -\frac{1}{\sqrt{\pi}}\right]\\
&=\frac{2\theta_0}{\theta_1}\left[\sqrt{\frac{\pi\theta_1}{2}}T\;erf\left(\sqrt{\frac{\theta_1}{2}}T\right)+\exp\left\{-\frac{\theta_1}{2} T^2 \right\} -1\right]
\end{align*}
where $u\coloneq\sqrt{\frac{\theta_1}{2}}(T-s)$ and $v\coloneq\sqrt{\frac{\theta_1}{2}}s$.
\subsection{Brownian motion covariance kernels}
Next, we would show a similar derivation for a Gaussian process with a Brownian motion covariance kernel.
\begin{align*}
    \int_0^T k_{BM}(s,t)dt&=\int_0^T \frac{1}{\theta}\min\left(s,t\right)dt\\
    &=\frac{1}{\theta}\left[\int_0^s t dt+\int_s^T s dt\right]\\
    &=\frac{1}{\theta}\left[\frac{1}{2}t^2\Big\vert_0^s+st\Big\vert_s^T\right]\\
    &=\frac{1}{\theta}\left(sT-\frac{1}{2}s^2\right)
\end{align*}
\begin{align*}
    \int_0^T\int_0^T k_{BM}(s,t)dt ds&=\int_0^T\frac{1}{\theta}\left(sT-\frac{1}{2}s^2\right)ds\\
    &=\frac{1}{\theta}\left( \frac{1}{2}Ts^2-\frac{1}{6}s^3 \right)\Bigg\vert_0^T\\
    &=\frac{1}{3\theta}T^3
\end{align*}
\section{Appendix : gradient of logarithm of the conditional posterior density}\label{appendix:gradient}
\noindent Derived from the conditional posterior density in \cref{eq:pos1}, we could easily write its logarithm as below, which could be splitted into 3 terms $g_1$, $g_2$, and $g_3$:
\begin{align}
&\log p(\boldsymbol{\lambda} \;| \{ x_i\}_{i=1}^M,\theta)\nonumber\\
&=\log\mathcal{N}\left(\boldsymbol{\lambda} ; \mathbf{0},\boldsymbol{V}_\theta\right)+\log\mathbbm{1}\left(\int_\mathcal{S}\lambda(s)ds>0\right) +\sum_{j=1}^{M-N} \log \mathbbm{1}\left(\lambda(t_j)>0\right)+\sum_{n=1}^N \log \mathbbm{1}\left(\lambda(s_n)>0\right)\nonumber\\
&-\int_\mathcal{S} \lambda(s) ds+ \sum_{n=1}^N \log \lambda(s_n) \nonumber\\
&\coloneq g_1\left(\boldsymbol{\lambda}\right)+g_2\left(\boldsymbol{\lambda}\right)+g_3\left(\lambda(s_1),\dots,\lambda(s_N),\int_\mathcal{S}\lambda(s)ds\right)\nonumber
\end{align}
where 
\begin{align*}
g_1\left(\boldsymbol{\lambda}\right)&\coloneq \log\mathcal{N}\left(\boldsymbol{\lambda}
 ; \mathbf{0},\boldsymbol{V}_\theta\right)\nonumber\\
&=-\frac{\boldsymbol{\lambda}' \boldsymbol{V}_\theta^{-1}\boldsymbol{\lambda}}{2} -\frac{M+1}{2}\log 2\pi -\frac{1}{2}\log|\boldsymbol{V}_\theta|
\end{align*}

\begin{align*}
g_2\left(\boldsymbol{\lambda}\right)&\coloneq \log\mathbbm{1}\left(\int_\mathcal{S}\lambda(s)ds>0\right) +\sum_{j=1}^{M-N} \log \mathbbm{1}\left(\lambda(t_j)>0\right)+\sum_{n=1}^N \log \mathbbm{1}\left(\lambda(s_n)>0\right)
\end{align*}

\begin{align*}
g_3\left(\lambda(s_1),\dots,\lambda(s_N),\int_\mathcal{S}\lambda(s)ds\right)\coloneq -\int_\mathcal{S} \lambda(s) ds+ \sum_{n=1}^N \log \lambda(s_n)
\end{align*}

\noindent To compute the gradient for $g_1$, we could apply the fact of matrix calculus that, $\frac{\partial X'A X}{\partial X}=2A X$, where $X $ is a $p$-dimensional vector and $A$ is a $p\times p$ matrix, to the derivative of logarithm of the normal distribution.
\begin{align*}
\nabla g_1\left(\boldsymbol{\lambda} \right)&=\left(\frac{\partial g_1}{\partial \lambda(x_1)}, \dots, \frac{\partial g_1 }{\partial \lambda(x_M)},\frac{\partial g_1}{\partial \int_\mathcal{S} \lambda(s)ds}  \right)' \nonumber\\
&=-\boldsymbol{V}_\theta^{-1}\boldsymbol{\lambda}
\end{align*}
Next, we aim to compute the derivative $g_2$ including all the indicator terms, at any nonzero location.
\begin{align*}
\nabla g_2\left(\boldsymbol{\lambda}\right)&=\left(\frac{\partial g_2}{\partial \lambda(x_1)}, \dots, \frac{\partial g_2 }{\partial \lambda(x_M)}, \frac{\partial g_2}{\partial \int_\mathcal{S} \lambda(s)ds}  \right)' \nonumber\\
&=(0,\dots,0,0)'
\end{align*}
The rest work is to compute the derivative of $g_3$.
Since the integral term could be written as a limit format $\int_\mathcal{S} \lambda(s)ds=lim_{\Delta x\rightarrow 0}\Delta x \cdot \sum_i \lambda(x_i)$, its derivative w.r.t any $\lambda(x_i)$ is always $\lim_{\Delta x\rightarrow 0}\Delta x=0$. Hence we have the below:
\begin{align*}
\nabla g_3\left(\lambda(s_1),\dots, \lambda(s_N),\int_\mathcal{S} \lambda(s)ds \right)&=\left(\frac{\partial g_3}{\partial \lambda(t_1)}, \dots, \frac{\partial g_3 }{\partial \lambda(t_{M-N})},\frac{\partial g_3}{\partial \lambda(s_1)}, \dots, \frac{\partial g_3 }{\partial \lambda(s_N)}, \frac{\partial g_3}{\partial \int_\mathcal{S} \lambda(s)ds}  \right)' \nonumber\\
&=\left(0,\dots,0,\frac{1}{\lambda(s_1)},\dots,\frac{1}{\lambda(s_N)},-1 \right)'
\end{align*}   
Summing up the derivatives of the three terms, we could derive the derivative of logarithm of the conditional posterior distribution as below.
\begin{align*}
&\nabla \log p(\boldsymbol{\lambda}\; | \{ x_i\}_{i=1}^M,\theta) =\left(\frac{\partial \log p}{\partial \lambda(t_1)},\dots,\frac{\partial \log p}{\partial \lambda(t_{M-N})}, \frac{\partial \log p}{\partial \lambda(s_1)},\dots,\frac{\partial \log p}{\partial \lambda(s_N)},\frac{\partial \log p}{\partial \int_{\mathcal{S}}\lambda(s)ds }\right)' \nonumber\\
&=\left(0,\dots,0, \frac{1}{\lambda(s_1)},\dots,\frac{1}{\lambda(s_N)},-1 \right)'-\boldsymbol{V}_\theta^{-1}\boldsymbol{\lambda}
\end{align*}
\section{Appendix : Brownian motion precision matrix boundary correction}\label{appendix:brownian}
\noindent We first show how to derive $p\left(\boldsymbol{\lambda}\;|\;\lambda(0)=y,\theta\right)$ in \cref{eq:BMcon}. For a Brownian motion starting at $y$, the conditional distribution of $\left(\lambda(x_1),\dots, \lambda(x_M)\right)$ given $\lambda(0)=y$ is:
\begin{align}    
p(\lambda(x_1),\dots, \lambda(x_M)\vert\lambda(0)=y,\theta)&\propto \exp\left\{-\frac{\theta}{2}\left[\frac{ \left(\lambda(x_{(1)})-y \right)^2   }{x_{(1)}}    +\sum_{i=2}^{M}\frac{\left[\lambda(x_{(i)})-\lambda(x_{(i-1)})\right]^2}{x_{(i)}-x_{(i-1)}}\right]   \right\}\nonumber\\
&= \exp\left\{ -\frac{\theta}{2} \begin{pmatrix}
\lambda(x_{(1)})-y\\ \vdots\\  \lambda(x_{(M}))-y \end{pmatrix}'Q \begin{pmatrix}
\lambda(x_{(1)})-y\\ \vdots\\  \lambda(x_{(M)})-y \end{pmatrix}      \right\}\nonumber\\
&= \exp\left\{ -\frac{\theta}{2} \begin{pmatrix}
\lambda(x_1)-y\\ \vdots\\  \lambda(x_M)-y\\ 
 \end{pmatrix}'\Sigma^{-1}  \begin{pmatrix}
\lambda(x_1)-y\\ \vdots\\  \lambda(x_M)-y\\ 
 \end{pmatrix}      \right\}
\label{eq:condBM}
\end{align}
where $x_{(1)},\dots,x_{(M)}$ are the increasingly ordered locations of  $x_1,\dots, x_M$, $Q$ is a symmetric matrix with components $
Q_{ij} = 
     \begin{cases}
       \cfrac{1}{x_{(1)}}+\cfrac{1}{x_{(2)}-x_{(1)}} &\quad\text{if}\; i=j=1\\
       \cfrac{1}{x_{(i+1)}-x_{(i)}}+\cfrac{1}{x_{(i)}-x_{(i-1)}} &\quad\text{if}\; 1<i=j<M\\
        \cfrac{1}{x_{(M)}-x_{(M-1)}} &\quad\text{if}\; i=j=M\\
         -\cfrac{1}{x_{(i+1)}-x_{(i)}} &\quad\text{if}\; j=i+1\\
0 &\quad\text{otherwise} \\
     \end{cases}
$,

and $\Sigma=\begin{pmatrix}
x_1 & \dots&\min(x_1,x_M) \\
\vdots&\ddots&\vdots\\
\min(x_M,x_1) & \dots&x_M \\
\end{pmatrix}$.
\\
Derived from \cref{eq:condBM} and \cref{thm:integral}, we obtain $p\left(\boldsymbol{\lambda}\;|\;\lambda(0)=y,\theta\right)$ as below.
\begin{align*}
p\left(\boldsymbol{\lambda}\;|\;\lambda(0)=y,\theta\right)
&\propto \exp\left\{ -\frac{\theta}{2} \begin{pmatrix}
\lambda(x_1)-y\\ \vdots\\  \lambda(x_M)-y\\ 
\int_0^T \lambda(s)ds-Ty\end{pmatrix}'C^{-1}  \begin{pmatrix}
\lambda(x_1)-y\\ \vdots\\  \lambda(x_M)-y\\ 
\int_0^T \lambda(s)ds-Ty\end{pmatrix}      \right\}\\
&=\exp\left\{ -\frac{\theta}{2} \left(\boldsymbol{\lambda}-y\boldsymbol{l}\right)'C^{-1}  \left(\boldsymbol{\lambda}-y\boldsymbol{l}\right) \right\}
\end{align*}
where $\boldsymbol{l}=(1,\dots,1,T)'$ and $C$ is the same as~\cref{eq:BMcov}.

Next, we will show how to derive the distribution of a positive truncated random walk, after boundary correction, corresponding to \cref{eq:BMcorrection}.
\begin{align*}
&p\left(\boldsymbol{\lambda} \;|\; \theta\right)\nonumber\\
&=\int_{-\infty}^{+\infty} p\left(\boldsymbol{\lambda}\;|\;\lambda(0)=y,\theta\right)\mathcal{N}(y;0,\sigma^2)dy \nonumber\\
&\propto\mathbbm{1}(\boldsymbol{\lambda}>\boldsymbol{0})\int_{-\infty}^{+\infty}  \exp\left\{ -\frac{\theta}{2} \left[\left(y\boldsymbol{l}-\boldsymbol{\lambda}\right)'C^{-1}  \left(y\boldsymbol{l}-\boldsymbol{\lambda}\right) +\frac{y^2}{\theta\sigma^2}     \right]\right\}dy\nonumber\\
&=\mathbbm{1}(\boldsymbol{\lambda}>\boldsymbol{0})\int_{-\infty}^{+\infty}  \exp\left\{ -\frac{\theta}{2} \left[     y^2\left(\boldsymbol{l}'C^{-1}\boldsymbol{l}+\frac{1}{\theta\sigma^2}\right) -2y\boldsymbol{l}'C^{-1}\boldsymbol{\lambda}+\boldsymbol{\lambda}'C^{-1}\boldsymbol{\lambda}   \right]\right\}dy\nonumber\\
&=\mathbbm{1}(\boldsymbol{\lambda}>\boldsymbol{0})\int_{-\infty}^{+\infty}  \exp\left\{ -\frac{\theta}{2} \left[ \left(\boldsymbol{l}'C^{-1}\boldsymbol{l}+\frac{1}{\theta\sigma^2}\right)\left(y-\frac{\boldsymbol{l}'C^{-1}\boldsymbol{\lambda}}{\boldsymbol{l}'C^{-1}\boldsymbol{l}+\frac{1}{\theta\sigma^2}} \right)^2 +\boldsymbol{\lambda}'C^{-1}\boldsymbol{\lambda} -\frac{(\boldsymbol{l}'C^{-1}\boldsymbol{\lambda})^2}{\boldsymbol{l}'C^{-1}\boldsymbol{l}+\frac{1}{\theta\sigma^2}}    \right]\right\}dy\nonumber\\
&= \mathbbm{1}(\boldsymbol{\lambda}>\boldsymbol{0})\exp\left\{ -\frac{\theta}{2} \left[ \boldsymbol{\lambda}'C^{-1}\boldsymbol{\lambda} -\frac{(\boldsymbol{l}'C^{-1}\boldsymbol{\lambda})^2}{\boldsymbol{l}'C^{-1}\boldsymbol{l}+\frac{1}{\theta\sigma^2}} \right] \right\}\int_{-\infty}^{+\infty}  \exp\left\{ -\frac{\theta\left(\boldsymbol{l}'C^{-1}\boldsymbol{l}+\frac{1}{\theta\sigma^2}\right)}{2} \left(y-\frac{\boldsymbol{l}'C^{-1}\boldsymbol{\lambda}}{\boldsymbol{l}'C^{-1}\boldsymbol{l}+\frac{1}{\theta\sigma^2}} \right)^2   \right\}dy       \nonumber\\
&= \mathbbm{1}(\boldsymbol{\lambda}>\boldsymbol{0})\exp\left\{ -\frac{\theta}{2} \left[ \boldsymbol{\lambda}'C^{-1}\boldsymbol{\lambda} -\frac{(\boldsymbol{l}'C^{-1}\boldsymbol{\lambda})^2}{\boldsymbol{l}'C^{-1}\boldsymbol{l}+\frac{1}{\theta\sigma^2}} \right] \right\} \left(\theta\boldsymbol{l}'C^{-1}\boldsymbol{l}+\frac{1}{\sigma^2}\right)^{-\frac{1}{2}}         \int_{-\infty}^{+\infty}  \exp\left\{ -\frac{z^2}{2} \right\}dz     \nonumber\\
&\propto \mathbbm{1}(\boldsymbol{\lambda}>\boldsymbol{0})\exp\left\{ -\frac{\theta}{2} \left[ \boldsymbol{\lambda}'C^{-1}\boldsymbol{\lambda} -\frac{(\boldsymbol{l}'C^{-1}\boldsymbol{\lambda})^2}{\boldsymbol{l}'C^{-1}\boldsymbol{l}+\frac{1}{\theta\sigma^2}} \right] \right\}\nonumber\\
&=\mathbbm{1}(\boldsymbol{\lambda}>\boldsymbol{0})\exp\left\{ -\frac{\theta}{2} \boldsymbol{\lambda}'\left[ C^{-1}- \frac{C^{-1}\boldsymbol{l}\boldsymbol{l}'C^{-1}}{\boldsymbol{l}'C^{-1}\boldsymbol{l}+\frac{1}{\theta\sigma^2}}   \right]\boldsymbol{\lambda}\right\}\nonumber\\
&\colonapprox \mathbbm{1}(\boldsymbol{\lambda}>\boldsymbol{0})\exp\left\{ -\frac{\theta}{2} \boldsymbol{\lambda}'\tilde{Q}\boldsymbol{\lambda}\right\}
\end{align*}
where $z\coloneq\sqrt{\theta\boldsymbol{l}'C^{-1}\boldsymbol{l}+\frac{1}{\sigma^2}}\left(y-\cfrac{\boldsymbol{l}'C^{-1}\boldsymbol{\lambda}}{\boldsymbol{l}'C^{-1}\boldsymbol{l}+\frac{1}{\theta\sigma^2}}\right)$, and $\tilde{Q}\coloneq C^{-1}- \cfrac{C^{-1}\boldsymbol{l}\boldsymbol{l}'C^{-1}}{\boldsymbol{l}'C^{-1}\boldsymbol{l}} $.

\section{Appendix : additional experiments results}\label{appendix:addres}
\subsection{Simulations}
Here we present extra qualitative and quantitative experiment results for the two Poisson processes with intensities in~\cref{eq:equations}. In addition to~\cref{fig:syn1}, ~\cref{fig:syn1traceplots,fig:syn2traceplots} present traceplots of the latent GP at midpoint and histograms of the cumulative intensity function for the two simulated datasets with intensities $\lambda_1(s)$ and $\lambda_2(s)$ (\cref{eq:equations}). All results show that the MCMC algorithm converges well and modes of histograms approach the ground truth values for the cumulative intensity function.~\cref{table:syn1full,table:syn1timefull,table:syn2full,table:syn2timefull} present quantiles of the same performance evaluation metrics as those reported in \cref{table:syn1compare,table:syn1runningtime} across each simulated 100 datasets from the original, double and
triple of both intensities. In addition to reporting those quantities at grid points, we also report them at observed locations.
\begin{figure}
  \centerline{
  \includegraphics[width=0.35\textwidth]{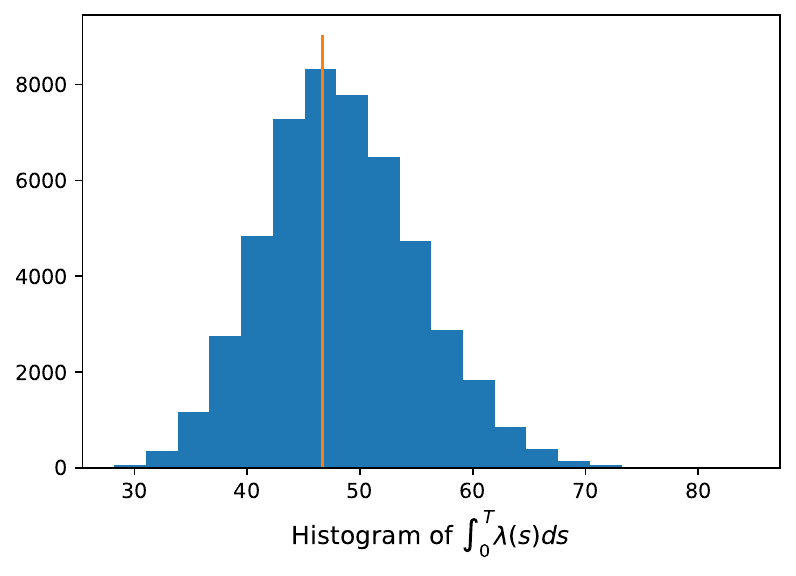}
   \includegraphics[width=0.35\textwidth]{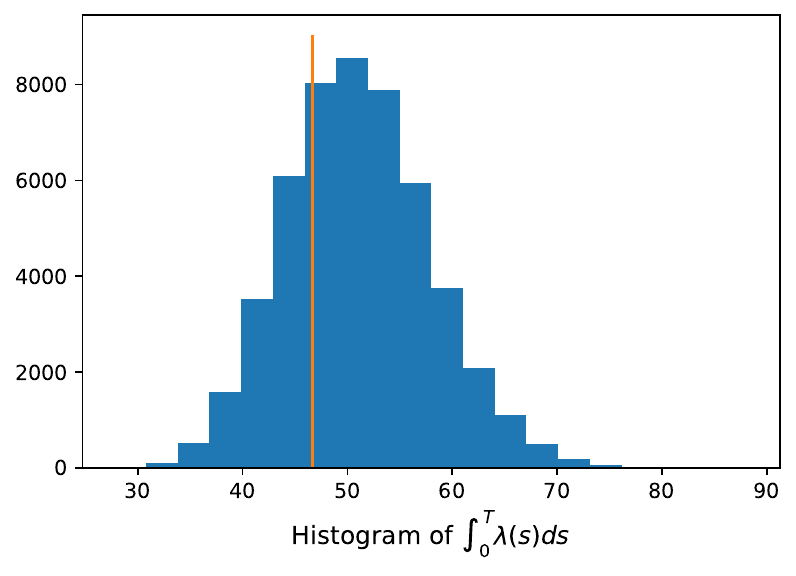}
   \includegraphics[width=0.35\textwidth]{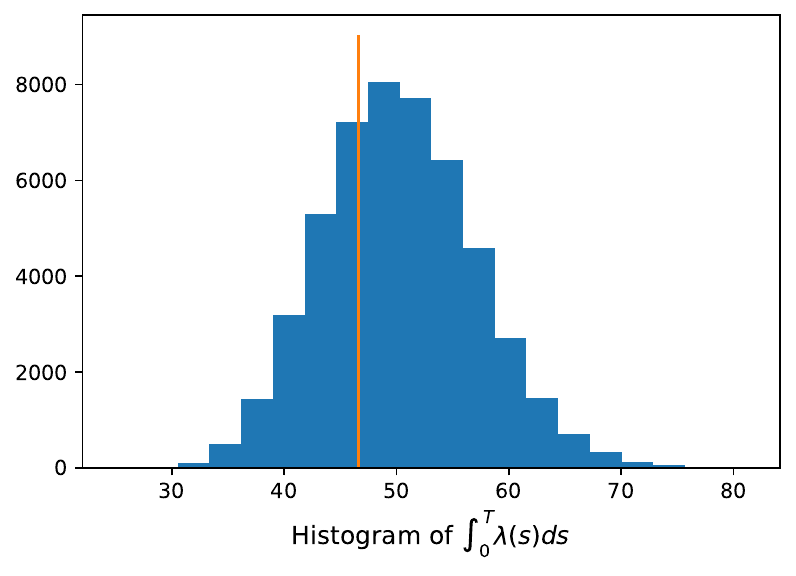}
  
  }
  \centerline{
  \includegraphics[width=0.35\textwidth]{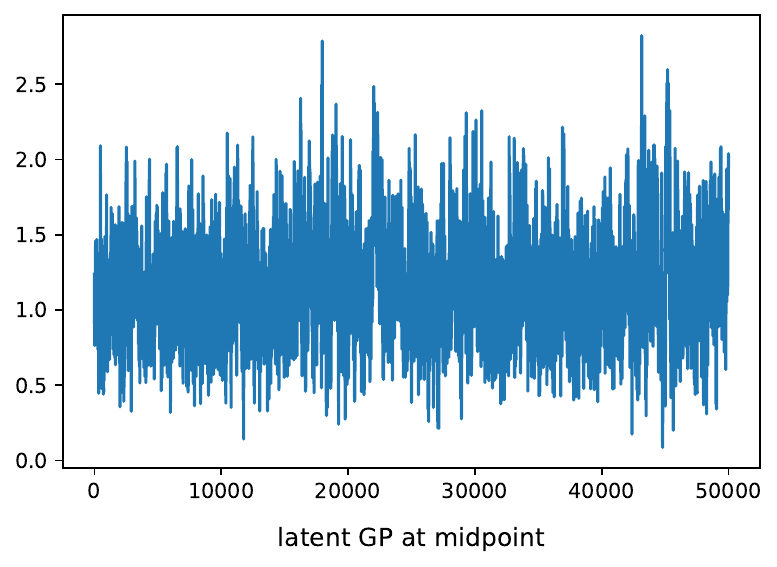}
  \includegraphics[width=0.35\textwidth]{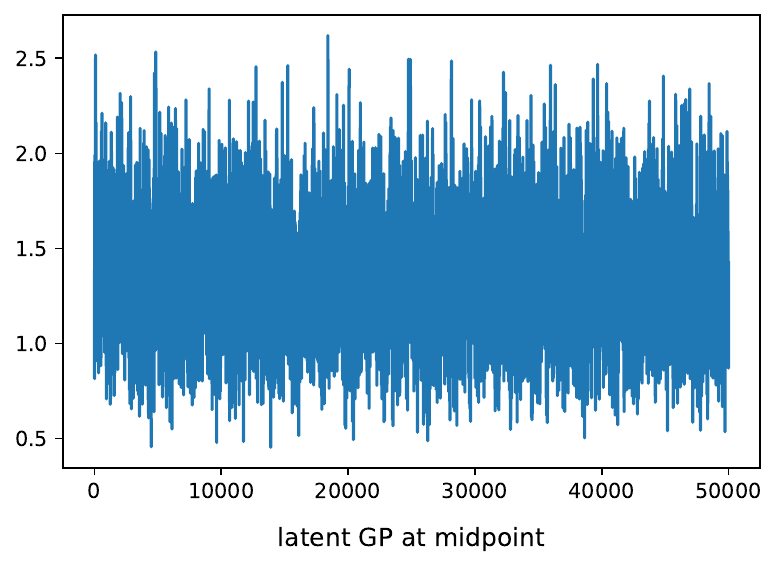}
   \includegraphics[width=0.35\textwidth]{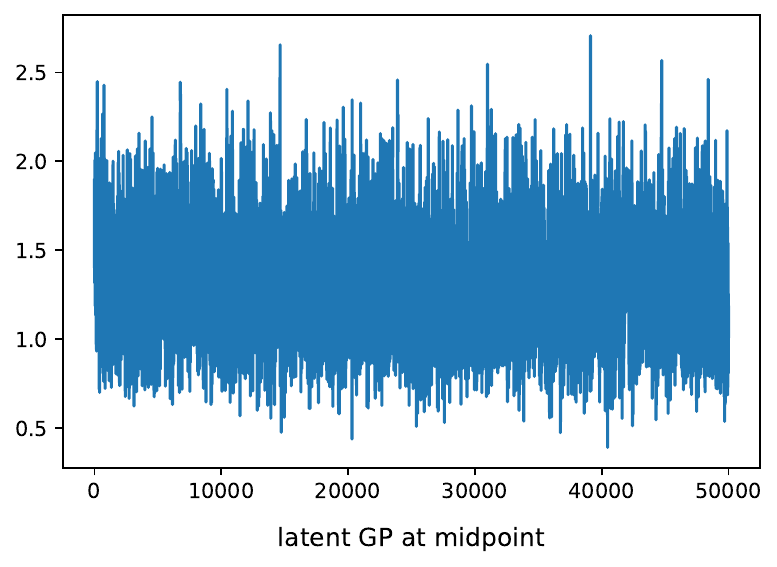}
  }
   \captionsetup{font=footnotesize}
    \vspace{\baselineskip}
  \caption{
  MCMC traceplots for posterior samples of latent GP at midpoint along with histograms of the cumulative intensity function for the same simulated dataset from $\lambda_1(s)$ as that in top row of \cref{fig:syn1}. On the top row, the red vertical line in each histogram represents the ground truth value for $\int_0^T \lambda_1(s)ds$. The three columns from left to right correspond to results using methods RI-BM, RI-MAP and RI-MLE respectively. }\label{fig:syn1traceplots}
\end{figure}

\begin{table}
\captionsetup{font=footnotesize}
\caption{ Performance comparison on simulations with intensities $\lambda_1(s)$,  $2\lambda_1(s)$ and $3\lambda_1(s)$. The last three columns present quantities in the format: 0.50 quantile (0.025 quantile, 0.25 quantile, 0.75 quantile, 0.975 quantile). Bold is the best among methods with the same kernel and red is the best among all methods.}
\label{table:syn1full}

\vspace{1\baselineskip} 

\centering
\begin{adjustbox}{width=\textwidth} 
\begin{tabular}{p{1.2cm}p{1.8cm}p{5.6cm}p{4.9cm}p{4.6cm}} 
\toprule[1.5pt]
\textbf{Expected Event Counts} & \textbf{Methods} & \textbf{SSE at Observations}& \textbf{Coverage at Observations}&\textbf{Credible Interval Width based on Observations} \\
\midrule
\multirow{7}{*}{46.65} &RI-BM&\textbf{3.21}\;\;\;(\textbf{1.04},\;\;\textbf{2.25},\;\;\;\;\;\textcolor{red}{4.16},\;\;\;\; \textcolor{red}{8.34})&98\%\;\;(83\%,   95\%,\;\;\;\textcolor{red}{100\%},\;\textcolor{red}{100\%})&1.32 (1.15, 1.25, 1.39, 1.61)\\
&SGCP-BM&4.07\;\;\;(1.10,\;\;2.86,\;\;\;\;\;7.57,\;\;\;23.20)&\textcolor{red}{100\%}(80\%, \textcolor{red}{100\%},\;\;\textcolor{red}{100\%}, \textcolor{red}{100\%})&1.60 (1.30, 1.46, 1.73, 2.09)\\
&INLA-BM &4.19\;\;\;(1.49,\;\;3.18,\;\;\;\;\;6.32,\;\;\;15.93) & 
\textcolor{red}{100\%}(\textcolor{red}{88\%}, \textcolor{red}{100\%},\;\;\textcolor{red}{100\%},  \textcolor{red}{100\%})&2.02 (1.33, 1.69, 2.30,  2.90)\\ 
\cmidrule(r){2-5}
& RI-MAP& 4.23\;\;\;(1.53,\;\;3.08,\;\;\;\;\;5.46,\;\;\;10.46) &\textbf{90\%}\;(\textbf{47\%},\;\;\textbf{77\%},\;\;\textcolor{red}{100\%},\;\textcolor{red}{100\%})&0.96 (0.63, 0.80, 1.18, 1.56)\\
& SGCP-MAP& \textbf{3.63}\;\;\;(\textbf{1.21},\;\;\textbf{3.04},\;\;\;\;\;\textbf{4.59},\;\;\;\;\;\textbf{9.85}) & 88\%\;\;(\textbf{47\%},\;\;74\%,\;\;\textcolor{red}{100\%},\;\textcolor{red}{100\%})&0.94 (0.64, 0.80, 1.11, 1.40)\\
\cmidrule(r){2-5}
&RI-MLE& 3.55\;\;\;(\textcolor{red}{0.84},\;\;2.40,\;\;\;\;\;\textbf{4.41},\;\;\;\;\;\textbf{9.64})
 &
\textcolor{red}{100\%}(\textbf{74\%},\;\;94\%,\;\;\textcolor{red}{100\%}, \textcolor{red}{100\%})&1.19 (1.05, 1.14, 1.24, 1.31)\\
&SGCP-MLE&\textcolor{red}{ 3.17}\;\;\;(0.98,\;\;\textcolor{red}{2.08},\;\;\;\;\;4.52,\;\;\;10.31)
 &
\textcolor{red}{100\%}(70\%,\;\;\;\textbf{96\%},\;\textcolor{red}{100\%},\;\textcolor{red}{100\%})&1.15 (1.01, 1.09, 1.18, 1.26)\\
\midrule[1.5pt]
 \multirow{3}{4em}{93.3} &RI-BM&\textbf{17.70}\;(6.95,\;\;13.72,\;\;\;\textcolor{red}{23.45},\;\;\textcolor{red}{44.76})&99\%\;\;(79\%,\;\;94\%,\;\;\textcolor{red}{100\%},\;\textcolor{red}{100\%})&1.83 (1.62, 1.74, 1.93, 2.15)\\
&SGCP-BM&19.42\;(\textbf{4.06},\;\;\textbf{13.21},\;\;\;33.42,\;\;91.06)&\textcolor{red}{100\%} (87\%, \textcolor{red}{100\%},  \textcolor{red}{100\%}, \textcolor{red}{100\%})&2.58 (2.12, 2.43, 2.75, 3.26)\\ 
&INLA-BM & 21.77\;(6.72,\;\;15.63,\;\;\;35.66,\;\;98.56)& \textcolor{red}{100\%} (\textcolor{red}{89\%}, \textcolor{red}{100\%}, \textcolor{red}{100\%}, \textcolor{red}{100\%})& 3.06 (2.36, 2.83, 3.42, 4.11)  \\ \cmidrule(r){2-5}
& RI-MAP& 24.20\;(5.40,\;\;17.56,\;\;\;29.65,\;\;60.73)
 &93\%\;\;\;(53\%,\;\;77\%,\;\;\textcolor{red}{100\%}, \textcolor{red}{100\%})&1.61 (1.04, 1.44, 1.79, 2.28)\\
\cmidrule(r){2-5} 
 &RI-MLE& \textcolor{red}{14.66}\;(\textcolor{red}{3.53},\;\;\textcolor{red}{10.76},\;\;\;25.12,\;\;64.12)
 &
\textcolor{red}{100\%}\;\;(71\%,\;\;98\%,\;\;\textcolor{red}{100\%},\;\textcolor{red}{100\%})&1.81 (1.66, 1.76, 1.88, 1.97)\\\midrule[1.5pt]
 
 \multirow{3}{4em}{139.95} &RI-BM& \textbf{49.32} (24.13, \textbf{38.53},\;\;\textcolor{red}{63.82},\; \textcolor{red}{121.38}) &99\%\;\;\;(69\%,\;\;\;93\%,\;\;\textcolor{red}{100\%},\;\textcolor{red}{100\%})&2.32 (1.91, 2.20, 2.50, 2.63)\\
 &INLA-BM & 65.31 (\textbf{20.04}, 42.13, 101.72,  277.04)&\textcolor{red}{100\%}\;\;(\textcolor{red}{88\%},\;\;\textcolor{red}{99\%},\;\;\textcolor{red}{100\%},\;\textcolor{red}{100\%}) &4.14 (3.03, 3.77, 4.51, 5.29)\\
 \cmidrule(r){2-5}
 &RI-MAP& 63.46 (13.30, 35.73, \;89.16,\; 157.57)
  &94\%\;\;\;(49\%,\;\;\;77\%,\;\;\textcolor{red}{100\%},\;\textcolor{red}{100\%})&2.09 (1.60, 1.88, 2.34, 2.86)\\
\cmidrule(r){2-5}
 &RI-MLE&\textcolor{red}{37.26}\;\;(\textcolor{red}{7.92},\;\;\textcolor{red}{20.94},\;\;66.00,\;132.17)
 &
\textcolor{red}{100\%}\;\;(73\%,\;\;96\%,\;\;\textcolor{red}{100\%},\;\;\textcolor{red}{100\%})&2.35 (2.11, 2.28, 2.41, 2.54)\\
\midrule[1.5pt]
\textbf{Expected Event Counts} & \textbf{Methods} &\textbf{SSE at 100 Grids} & \textbf{Coverage at 100 Grids}&\textbf{Credible Interval Width based on 100 Grids}\\
\midrule
 \multirow{3}{4em}{93.3} 
 &RI-BM&\textbf{16.20} (7.39,\;\;\textbf{12.58}, \textbf{20.97},\;\;\textcolor{red}{43.88})&99\%\;\;(80\%,\;\;96\%,\;\;\textcolor{red}{100\%},  \textcolor{red}{100\%})&1.68 (1.46, 1.57, 1.74, 1.95)\\
&SGCP-BM&17.84  (\textbf{5.78},\;\;13.54,    28.46,\;\;71.43)&\textcolor{red}{100\%} (\textcolor{red}{86\%}, \textcolor{red}{100\%}, \textcolor{red}{100\%}, \textcolor{red}{100\%})&2.14 (1.82, 2.02, 2.28, 2.70)\\
&INLA-BM & 19.95  (7.30,\;\;15.12,  27.41,\;\;51.68)&
\textcolor{red}{100\%}\;(85\%,\;\;99\%,\;\;\textcolor{red}{100\%},  \textcolor{red}{100\%})&2.33 (1.89, 2.21, 2.51, 2.75)\\\cmidrule(r){2-5}
 & RI-MAP&20.60   (5.46,\;\;15.11,    26.20,\;\;48.31)
 &92\%\;\;\;(53\%,\;\;80\%,\;\;\textcolor{red}{100\%},  \textcolor{red}{100\%})&1.39 (0.87, 1.25, 1.55, 1.93)\\
\cmidrule(r){2-5}
 &RI-MLE& \textcolor{red}{13.35}\;(\textcolor{red}{3.13},\;\;\textcolor{red}{9.95},\;\;\textcolor{red}{19.11},\;\;48.04)
 &\textcolor{red}{100\%}\;\;(74\%,  95\%,\;\;\textcolor{red}{100\%},  \textcolor{red}{100\%})&1.55 (1.39, 1.50, 1.60, 1.67)\\
\midrule[1.5pt]
  \multirow{3}{4em}{139.95} 
  &RI-BM& \textbf{27.91} (\textbf{12.00}, \textbf{19.76}, \textbf{37.28}, \textbf{67.12}) &99\%\;\;\;(78\%,\;\;93\%,\;\;\textcolor{red}{100\%}, \textcolor{red}{100\%})&2.07 (1.74, 1.95, 2.21, 2.40)\\
   &INLA-BM &35.65  (12.55,   24.77,   53.81,    93.92)&\textcolor{red}{100\%}\;\;(\textcolor{red}{88\%},\;\;\textcolor{red}{99\%},\;\textcolor{red}{100\%},  \textcolor{red}{100\%})&3.15 (2.46, 2.91, 3.34, 3.70)\\\cmidrule(r){2-5}
  &RI-MAP& 35.86\;\;(8.12,\;\;18.92,  48.65,    85.82)
  &94\%\;\;\;(49\%,\;\;81\%,\;\;\textcolor{red}{100\%},  \textcolor{red}{100\%})&1.79 (1.39, 1.62, 1.94, 2.47)\\\cmidrule(r){2-5}
&RI-MLE& \textcolor{red}{21.44}\;\;(\textcolor{red}{4.26},\;\;\textcolor{red}{11.74},  \textcolor{red}{37.21},    \textcolor{red}{65.77})
 &
\textcolor{red}{100\%}\;\;(79\%,\;\;96\%,\;\;\textcolor{red}{100\%},  \textcolor{red}{100\%})&1.98 (1.83, 1.92, 2.03, 2.12)\\

\bottomrule[1.5pt]
\end{tabular}
\end{adjustbox}

\vspace{1\baselineskip} 

\end{table}

\begin{figure}
  \centerline{
  \includegraphics[width=0.35\textwidth]{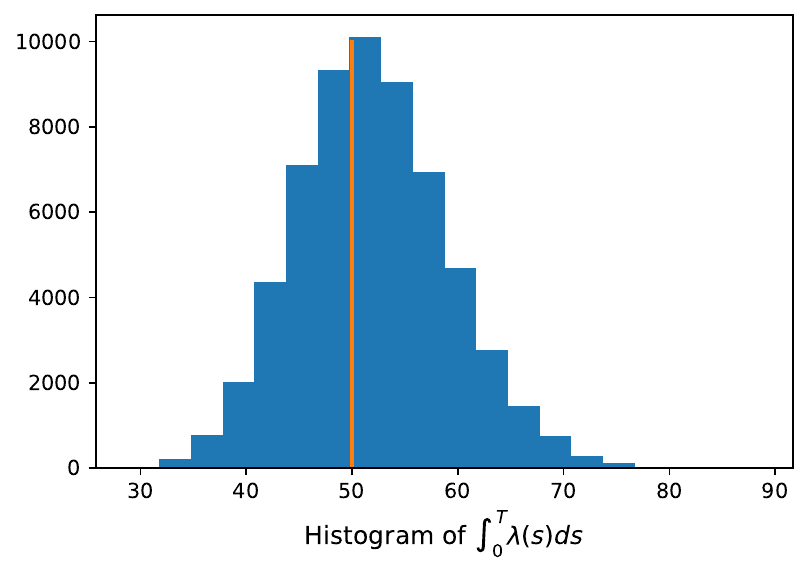}
   \includegraphics[width=0.35\textwidth]{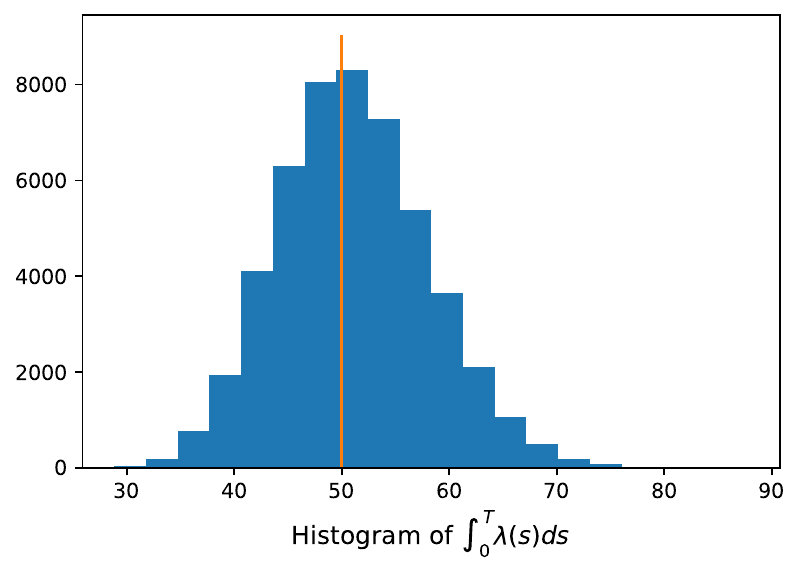}
   \includegraphics[width=0.35\textwidth]{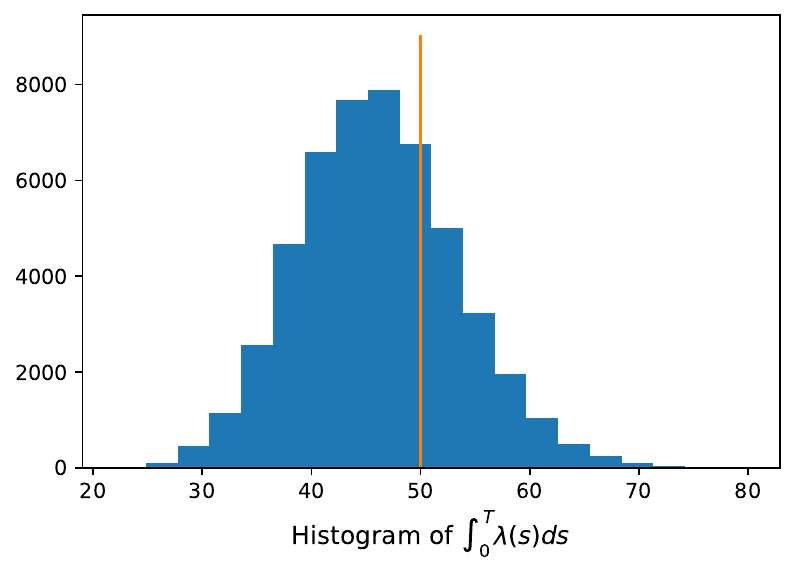}
  
  }
  \centerline{
  \includegraphics[width=0.35\textwidth]{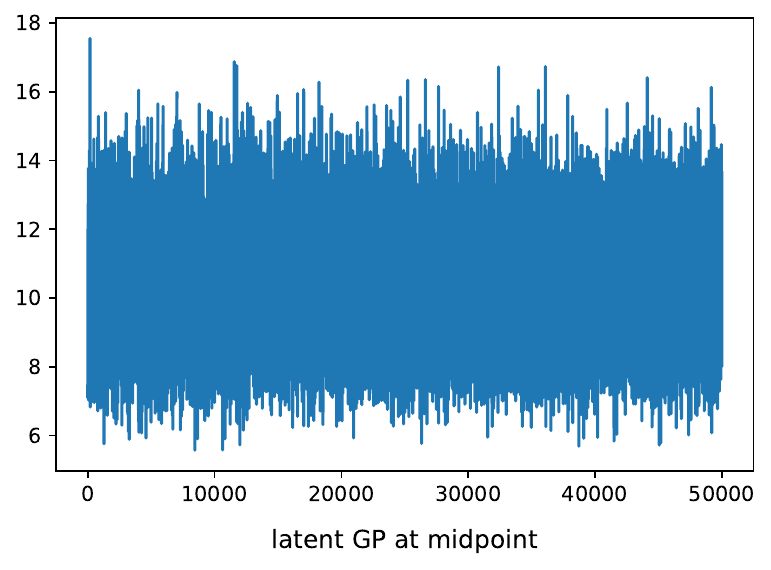}
  \includegraphics[width=0.35\textwidth]{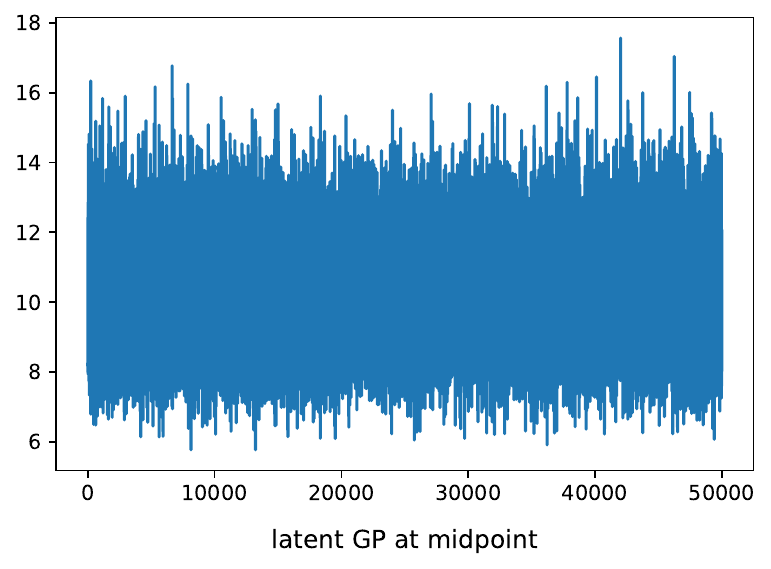}
   \includegraphics[width=0.35\textwidth]{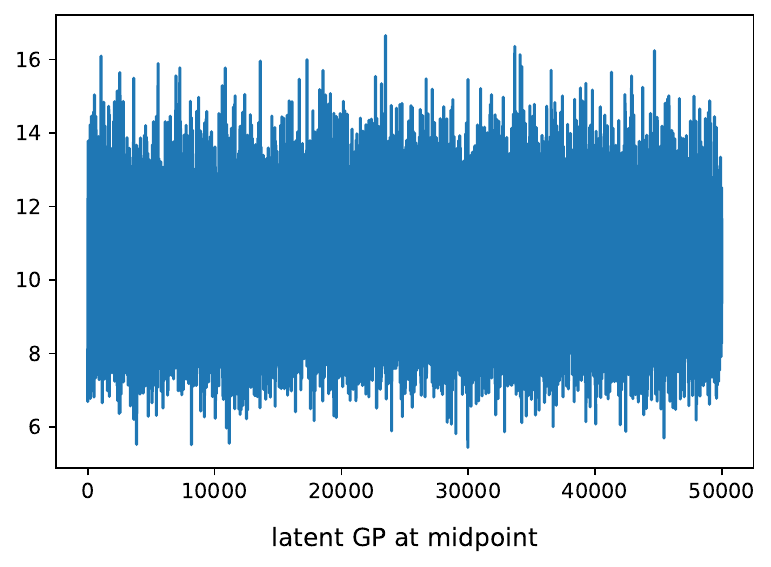}
  }
   \captionsetup{font=footnotesize}
     \vspace{\baselineskip}
  \caption{ MCMC traceplots for posterior samples of latent GP at midpoint along with histograms of the cumulative intensity function for the same simulated dataset from $\lambda_2(s)$ as that in bottom row of \cref{fig:syn1}. On the top row, the red vertical line in each histogram represents the ground truth value for $\int_0^T \lambda_2(s)ds$. The three columns from left to right correspond to results using methods RI-BM, RI-MAP and RI-MLE respectively. }\label{fig:syn2traceplots}
\end{figure}
\begin{table}
\footnotesize
\captionsetup{font=footnotesize}
\caption{ Running time for estimating intensities $2\lambda_1(s)$ and $3\lambda_1(s)$.}
\label{table:syn1timefull}

\vspace{1\baselineskip} 

\centering
\begin{adjustbox}{width=0.9\textwidth} 
\begin{tabular}{p{4.2cm}p{1.8cm}p{6.3cm}} 
\toprule[1.5pt]
\textbf{ Expected Event Counts} & \textbf{Methods}& \textbf{Average Time $\pm$ Standard Deviation}\\ [0.5ex]
 \hline
 \multirow{3}{4em}{93.3} &RI-BM&3.68 $\pm$ 1.52 s\\ 
 &SGCP-BM&234392.72 $\pm$ 21852.78 s\\
 &INLA-BM &
4.81 $\pm$ 0.85 s\\
\cmidrule(r){2-3}
 & RI-MAP& 3.67 $\pm$ 0.29 s\\
  \cmidrule(r){2-3}
&RI-MLE&
3.81 $\pm$ 0.39 s\\

\midrule[1.5pt]
  \multirow{3}{4em}{139.95} &RI-BM&3.22$ \pm$ 0.60 s \\
   &INLA-BM & 
4.79 $\pm$ 0.12 s\\
 \cmidrule(r){2-3}
  & RI-MAP&  3.97 $\pm$ 0.30 s\\
   \cmidrule(r){2-3}
 &RI-MLE&
4.22 $\pm$  0.28 s\\

\bottomrule[1.5pt]
\end{tabular}
\end{adjustbox}


\end{table}
\begin{table}
\captionsetup{font=footnotesize}
\caption{  Performance comparison on simulations with intensities $\lambda_2(s)$,  $2\lambda_2(s)$ and $3\lambda_2(s)$. The last three columns present quantities in the format: 0.50 quantile (0.025 quantile, 0.25 quantile, 0.75 quantile, 0.975 quantile). Bold is the best among methods with the same kernel and red is the best among all methods.}
\label{table:syn2full}

\vspace{1\baselineskip} 

\centering
\begin{adjustbox}{width=\textwidth} 
\begin{tabular}{p{1.2cm}p{1.8cm}p{6.3cm}p{4.9cm}p{4.7cm}} 
\toprule[1.5pt]
\textbf{Expected Event Counts} & \textbf{Methods} & \textbf{SSE at Observations}& \textbf{Coverage at Observations}&\textbf{Credible Interval Width based on Observations} \\
\midrule
  \multirow{3}{4em}{50} &RI-BM &\textcolor{red}{40.57}\;\;\;(\textcolor{red}{1.55},\;\;\textcolor{red}{11.68},\;\textcolor{red}{107.33},\;\;\;\textcolor{red}{411.35}) &\textcolor{red}{100\%}\;(\textcolor{red}{100\%},\textcolor{red}{100\%},\textcolor{red}{100\%},\textcolor{red}{100\%}) &6.28\;\;\;(5.43,\;\;5.95,\;\;6.73,\;10.22) \\%
&SGCP-BM &112.96 (16.37,  55.14,  236.44,   1.45e+3)&\textcolor{red}{100\%}\;(77\%,\;\;\textcolor{red}{100\%},\textcolor{red}{100\%},\textcolor{red}{100\%})&9.10\;\;\;(7.77,\;\;8.45,\;10.08,12.55)\\
&INLA-BM &146.70    (35.54,   86.12,  251.31,  1.18e+3)& \textcolor{red}{100\%}\;(94\%,\;\;\textcolor{red}{100\%},\textcolor{red}{100\%},\textcolor{red}{100\%})&12.71\;(10.53,11.79,14.13,19.60)\\ 
\cmidrule(r){2-5}
  & RI-MAP& \textbf{63.88}\;\;\;(\textbf{0.10},\;\;\textbf{18.20},\;\;\textbf{152.75},\;\;\;\textbf{683.79})
 &\textcolor{red}{100\%}\;(\textbf{68\%}, \textcolor{red}{100\%},\textcolor{red}{100\%},\textcolor{red}{100\%})&5.60\;\;\;(4.81,\;\;5.26,\;\;5.87,\;\;9.70)\\
 & SGCP-MAP& 65.93\;\;\;(\textbf{0.10},\;\;18.21,\;\;213.01,\;1.24e+3)
 &\textcolor{red}{100\%}\;(39\%,\;\;\textcolor{red}{100\%},\textcolor{red}{100\%},\textcolor{red}{100\%})&5.82\;\;\;(4.78,\;\;5.43,\;\;6.29,\;10.31)\\\cmidrule(r){2-5}
& RI-MLE& \textbf{43.51}\;\;\;(\textbf{0.45},\;\;\textbf{13.69},\;\;\textbf{133.90},\;\;\;548.31)
 &
\textcolor{red}{100\%}\;(\textcolor{red}{100\%},\textcolor{red}{100\%},\textcolor{red}{100\%},\textcolor{red}{100\%})&5.84\;\;\;(4.85,\;\;5.41,\;\;6.84,\;13.60)\\
&SGCP-MLE& 53.96\;\;\;(0.46,\;\;17.70,\;\;136.95,\;\;\;\textbf{503.35})
 &
\textcolor{red}{100\%}\;(0\%,\;\;\;\;\textcolor{red}{100\%},\textcolor{red}{100\%},\textcolor{red}{100\%})&5.65\;\;\;(4.77,\;\;5.29,\;\;5.95,\;\;6.71)\\

\midrule[1.5pt]
  \multirow{3}{4em}{100} &RI-BM& \textbf{282.11}\;(\textbf{2.88},\;\;\;\;\textcolor{red}{72.87},\;\;\;\textcolor{red}{585.80},\;\;\textcolor{red}{1.63e+3})&\textcolor{red}{100\%}\;(62\%,\;\textcolor{red}{100\%},\;\textcolor{red}{100\%},\textcolor{red}{100\%})&9.10\;\;\;(7.68,\;\;8.52,\;\;9.77,\;12.31)\\
&SGCP-BM& 778.82  (83.47,\;\;267.56, 1.65e+3,\;5.42e+3)&\textcolor{red}{100\%}\;(73\%,\;\textcolor{red}{100\%},\;\textcolor{red}{100\%},\textcolor{red}{100\%})&14.72(11.62,13.37,16.01,   21.40)\\
 &INLA-BM&801.15  (218.49,450.33, 1.40e+3,\;4.35e+3)&\textcolor{red}{100\%}\;(\textbf{81\%},\;\textcolor{red}{100\%},\;\textcolor{red}{100\%},\textcolor{red}{100\%})&18.92(15.67,18.03,20.66,   24.26)   \\\cmidrule(r){2-5}
  & RI-MAP& \textcolor{red}{213.48}\;\;(\textcolor{red}{2.01},\;\;\;82.61,\;\;\;693.65,\;\;2.35e+3)
 &\textcolor{red}{100\%}\;(0\%,\;\;\;\textcolor{red}{100\%},\;\textcolor{red}{100\%},\textcolor{red}{100\%})&7.75\;\;\;(6.96,\;\;7.46,\;\;8.16,\;\;10.61)\\\cmidrule(r){2-5}
&RI-MLE&677.97\;\;(8.73,\;\;143.01,\;4.38e+3,\;3.47e+5)
&
\textcolor{red}{100\%}\;(0\%,\;\;\;\textcolor{red}{100\%},\;\textcolor{red}{100\%},\textcolor{red}{100\%})&8.15\;\;\;(6.43,\;\;7.51,\;36.28,109.61)\\

\midrule[1.5pt]
 
 \multirow{3}{4em}{150} &RI-BM&\textbf{1.19e+3}(\textbf{10.9},\;\;\;\;\;\textcolor{red}{205.51},\;\; \textbf{2.44e+3}, \textbf{6.29e+3})&\textcolor{red}{100\%}\;(0\%,\;\;\;\textcolor{red}{100\%},\;\textcolor{red}{100\%},\textcolor{red}{100\%})&12.06\;(9.75,    11.11,13.60,\;\;16.50)\\
 &INLA-BM&2.16e+3 (547.67, 1.30e+3, 3.70e+3, 1.00e+4) &\textcolor{red}{100\%}\;(\textbf{84\%},\;\textcolor{red}{100\%},\;\textcolor{red}{100\%},\textcolor{red}{100\%})&24.53\;(20.45,23.13,26.05,   30.33) \\\cmidrule(r){2-5}
 & RI-MAP& \textcolor{red}{605.16}\;\;\;(\textcolor{red}{5.14},\;\;\;\;218.68,\;\;\textcolor{red}{1.52e+3},  \textcolor{red}{4.97e+3}) 
  &\textcolor{red}{100\%}\;(0\%,\;\;\;\textcolor{red}{100\%},\;\textcolor{red}{100\%},\textcolor{red}{100\%})
  &9.42\;\;\;(8.57,\;\;9.20,\;\;9.85,\;\;11.00)\\\cmidrule(r){2-5}
&RI-MLE&4.21e+5 (354.95, 1.36e+5, 2.62e+6, 3.90e+7)
 &
0\%\;\;\;\;(0\%,\;\;\;\;0\%,\;\;\;\;48\%,\;\;100\%)&36.90 (8.63, 16.19,110.35,742.42)\\
\midrule[1.5pt]
\textbf{Expected Event Counts} & \textbf{Methods} &\textbf{SSE at 100 Grids} & \textbf{Coverage at 100 Grids}&\textbf{Credible Interval Width based on 100 Grids}\\\midrule
  \multirow{3}{4em}{100} &RI-BM&\textbf{298.82}  (\textbf{2.66},\;\;\;\;\;\textcolor{red}{75.56},\;\;\;\;\textcolor{red}{571.92},\;\;\;\textcolor{red}{1.47e+3})& \textcolor{red}{100\%}\;(57\%,  \textcolor{red}{100\%},\;\textcolor{red}{100\%},\textcolor{red}{100\%})&9.05\;\;\;(7.68,\;\;8.49,\;\;9.73,\;12.25) \\
&SGCP-BM&884.86  (89.05,\;\;\;303.59,\;\;1.47e+3,\;\;4.31e+3)& \textcolor{red}{100\%}\;(80\%,  \textcolor{red}{100\%},\;\textcolor{red}{100\%},\textcolor{red}{100\%})&14.94(11.87,13.60,16.19,  21.88)\\
&INLA-BM &850.71  (257.30,\;\;512.17,\;1.48e+3,\;\;2.98e+3)&\textcolor{red}{100\%}\;(\textbf{88\%}, \textcolor{red}{100\%},\;\textcolor{red}{100\%},\textcolor{red}{100\%})&18.15(15.21,17.43,19.73,  21.98)\\
\cmidrule(r){2-5}
  
  &RI-MAP& \textcolor{red}{213.53}\;\;(\textcolor{red}{2.03},\;\;\;\;\;79.55,\;\;\;675.76,\;\;\;2.31e+3)
 &\textcolor{red}{100\%}\;(0\%,\;\;\;\textcolor{red}{100\%},\;\textcolor{red}{100\%},\textcolor{red}{100\%})&7.75\;\;\;(6.96,\;\;7.46,\;\;8.16,\;10.37)\\\cmidrule(r){2-5}
&RI-MLE&396.97\;\;(5.32,\;\;\;\;131.65,\;\;\;1.49e+3,\;\;2.38e+5)
&
\textcolor{red}{100\%}\;(0\%,\;\;\;\textcolor{red}{100\%},\;\textcolor{red}{100\%},\textcolor{red}{100\%})&8.15\;\;\;(6.43,\;\;7.51,37.86,111.93)\\
 
\midrule[1.5pt]
  \multirow{3}{4em}{150}&RI-BM&\textbf{830.65}\;\;\;(\textbf{6.77},\;\;\;\;\textcolor{red}{138.58},\;\;\textbf{1.74e+3},\;\textcolor{red}{4.33e+3})&\textcolor{red}{100\%}\;(0\%,\;\;\;\textcolor{red}{100\%},\;\textcolor{red}{100\%},\textcolor{red}{100\%})&12.04  (9.74, 11.09,13.52,   16.11) \\
   &INLA-BM &1.61e+3 (378.05,\;947.39,\;\;2.61e+3,  6.19e+3)&\textcolor{red}{100\%}\;(\textbf{82\%},\;\textcolor{red}{100\%},\;\textcolor{red}{100\%},\textcolor{red}{100\%})&23.59(20.09,22.42,24.99,   28.39)  \\
\cmidrule(r){2-5}
  & RI-MAP&
   \textcolor{red}{434.85}\;\;\;(\textcolor{red}{3.39},\;\;\;\;151.44,\;\;\;\textcolor{red}{999.74},\;\;4.73e+3)
  &\textcolor{red}{100\%}\;(0\%,\;\;\;\textcolor{red}{100\%},\;\textcolor{red}{100\%},\textcolor{red}{100\%})&9.42\;\;(8.57,\;\;9.20,\;\;9.85,\;\;11.04)\\\cmidrule(r){2-5}
&RI-MLE&1.94e+5 (234.9,\;\;6.97e+4,
1.69e+6, 1.47e+7)
 &
0\%\;\;\;\;(0\%,\;\;\;\;\;0\%,\;\;\;\;96\%,\;\textcolor{red}{100\%})&36.90(8.63,16.24,117.17,703.80)\\

\bottomrule[1.5pt]
\end{tabular}
\end{adjustbox}

\vspace{1\baselineskip} 

\end{table}

\begin{table}
\footnotesize
\captionsetup{font=footnotesize}
\caption{  Running time for estimating intensities $2\lambda_2(s)$ and $3\lambda_2(s)$.}
\label{table:syn2timefull}

\vspace{1\baselineskip} 

\centering
\begin{adjustbox}{width=0.9\textwidth} 
\begin{tabular}{p{4.2cm}p{1.8cm}p{6.3cm}} 
\toprule[1.5pt]
\textbf{ Expected Event Counts} & \textbf{Methods}& \textbf{Average Time $\pm$ Standard Deviation}\\ [0.5ex]
 \hline
\multirow{3}{4em}{100} &RI-BM&1.83 $\pm$ 0.36 s\\
&SGCP-BM&43811.94 $\pm$ 9316.78 s\\
 &INLA-BM &
4.47 $\pm$ 0.20 s\\\cmidrule(r){2-3}
& RI-MAP& 3.02$\pm$ 0.24 s\\\cmidrule(r){2-3}&RI-MLE&
3.82 $\pm$ 0.54 s\\

\midrule[1.5pt]
   \multirow{3}{4em}{150} &RI-BM&2.45 $\pm$  0.89 s \\
 &INLA-BM & 
4.34 $\pm$ 0.07 s\\\cmidrule(r){2-3}

& RI-MAP& 3.27 $\pm$ 0.18 s\\\cmidrule(r){2-3}
 &RI-MLE&
5.28 $\pm$ 0.61 s\\
\bottomrule[1.5pt]
\end{tabular}
\end{adjustbox}

\vspace{1\baselineskip} 

\end{table}
\subsection{ Earthquakes in Japan $\&$ redwoods}
~\cref{fig:earthappend,fig:redwoodsappend} present traceplots of latent GP at midpoint and histograms of the cumulative intensity function for real examples in \cref{sec:earthquake} and \cref{sec:realexm3}.

\begin{figure}
  \centerline{
  \includegraphics[width=0.4\textwidth]{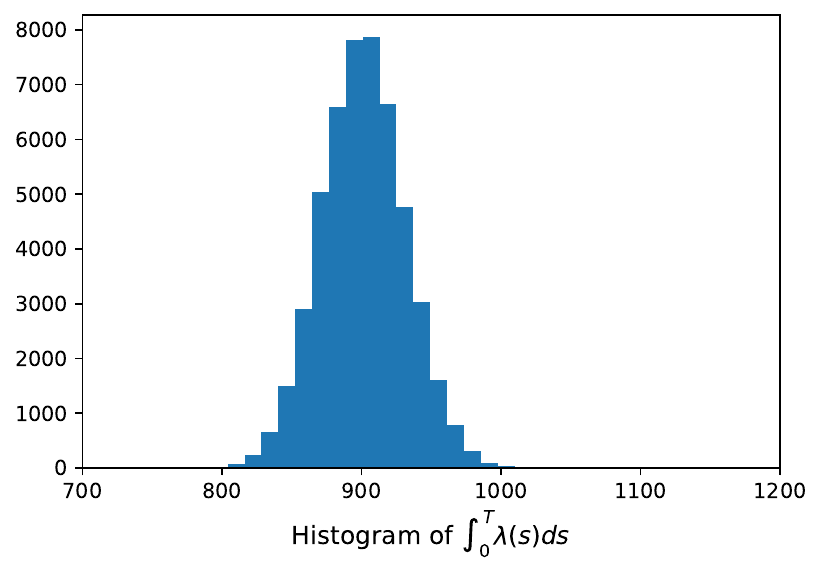}
  
   \includegraphics[width=0.4\textwidth]{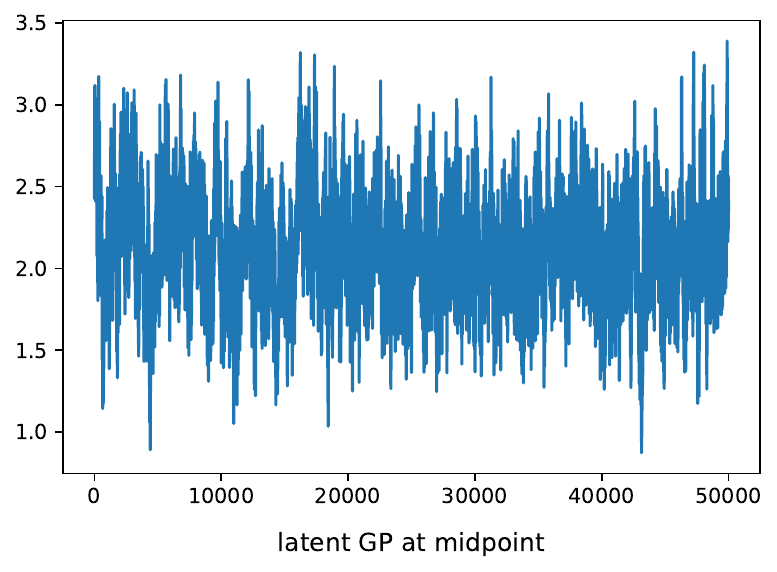}
  }
 \centerline{
  \includegraphics[width=0.4\textwidth]{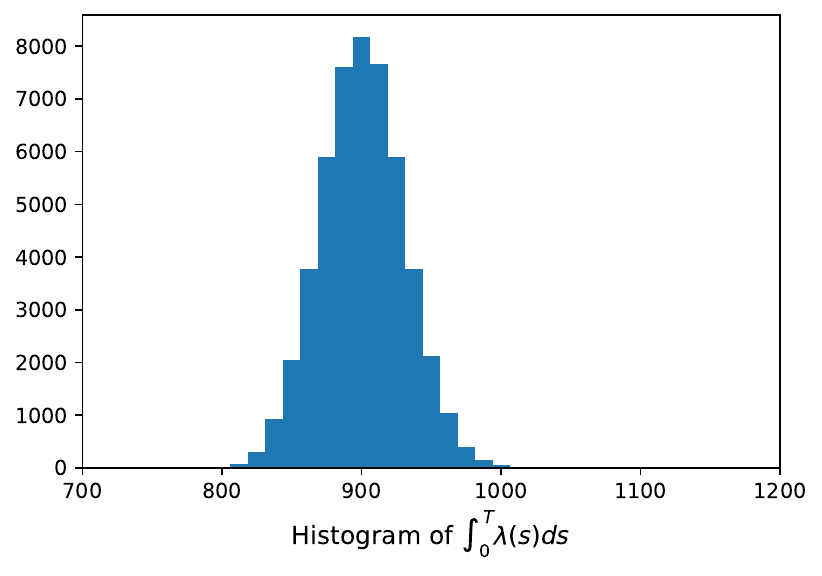}
  
   \includegraphics[width=0.4\textwidth]{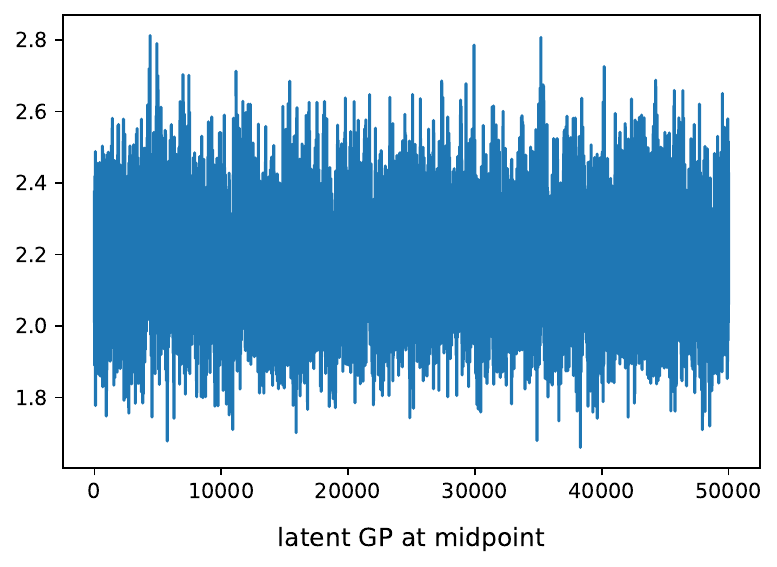}
  }
  \captionsetup{font=footnotesize}
  \vspace{\baselineskip}
  \caption{MCMC traceplots for posterior samples of latent GP at midpoint along with histograms of the cumulative intensity function for the recurrent earthquake data in \cref{sec:earthquake}. Top row corresponds to results using our proposed method RI-BM and bottom row corresponds to results using our proposed method RI-MAP.}\label{fig:earthappend}
\end{figure}

\begin{figure}
  \centerline{
  \includegraphics[width=0.4\textwidth]{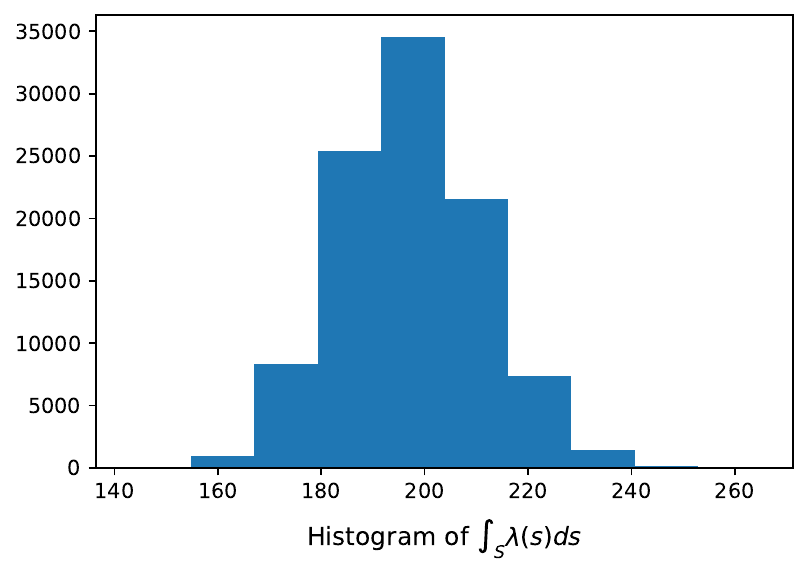}
  
   \includegraphics[width=0.4\textwidth]{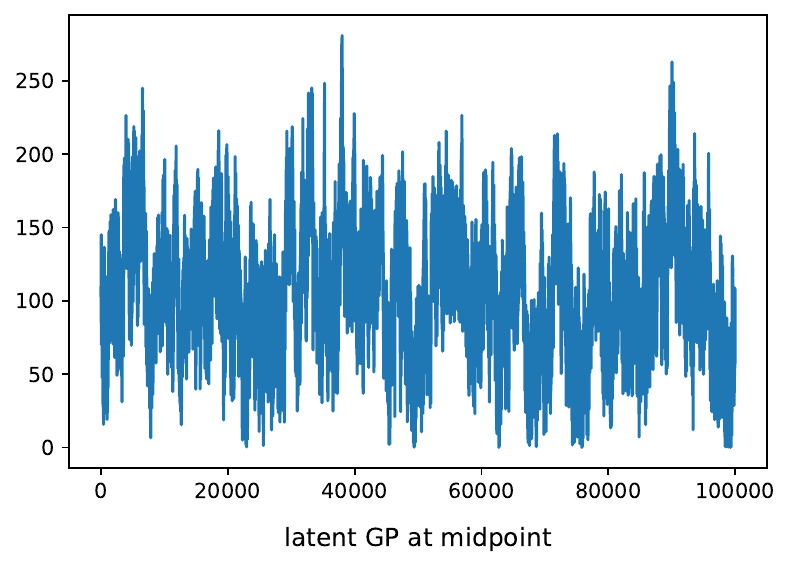}
  }
 \centerline{
  \includegraphics[width=0.4\textwidth]{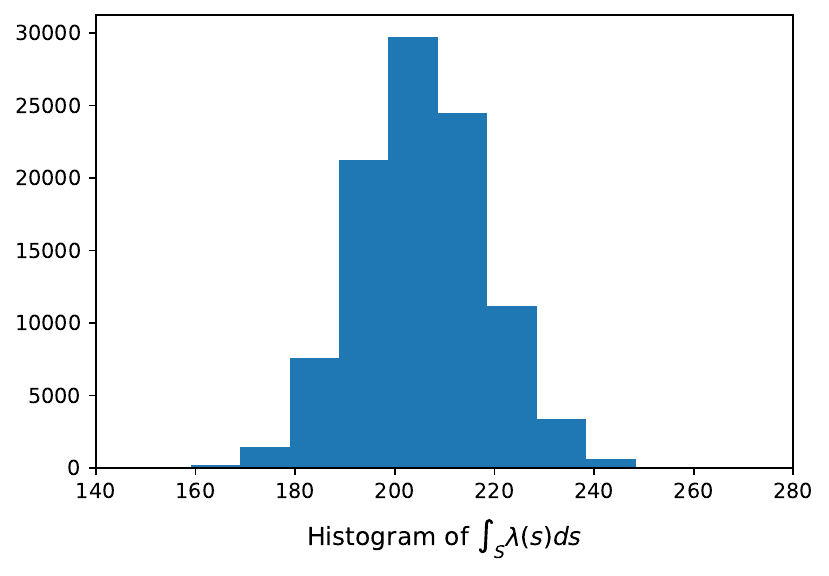}
  
   \includegraphics[width=0.4\textwidth]{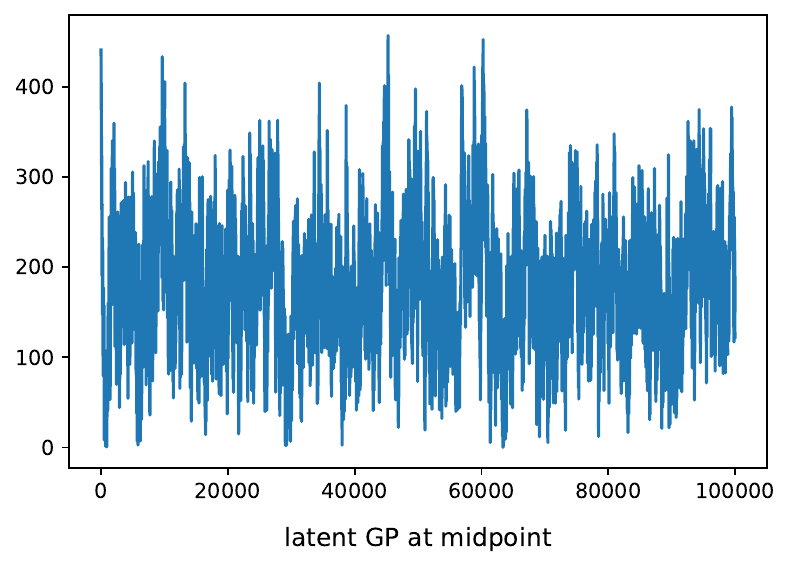}
  }
  \captionsetup{font=footnotesize}
   \vspace{\baselineskip}
  \caption{MCMC traceplots for posterior samples of latent GP at midpoint along with histograms of the cumulative intensity function for the redwoods data in \cref{sec:realexm3}. Top row corresponds to results using our proposed method RI-BM and bottom row corresponds to results using our proposed method RI-MAP.}\label{fig:redwoodsappend}
\end{figure}

\end{document}